\newcommand{\A}{\mathcal{A}}
\newcommand{\N}{\mathbb{N}}
\newcommand{\R}{\mathbb{R}}
\newtheorem{proposition}{Proposition}
\newtheorem{step}{Step}
\def\p{\smallskip}
\def\A{\boldsymbol{A}}
\def\N{\mathbb{N}}
\def\R{\mathbb{R}}
\def\Z{\mathbb{Z}}
\def\frac#1#2{{#1\over#2}}
\def\boxit#1{\vbox{\hrule\hbox{\vrule\kern.75truemm
\vbox{\kern.75truemm#1\kern1truemm}\kern1truemm\vrule}\hrule}}
\def\epsilon{\varepsilon}
\def\boxit#1{\vbox{\hrule\hbox{\vrule\kern.75truemm
\vbox{\kern.75truemm#1\kern1truemm}\kern1truemm\vrule}\hrule}}
\def\one{\boldsymbol{1}}
\renewcommand{\implies}{\Rightarrow}
\def\core{\mbox{\rm core\hspace{1pt}}}
\numberwithin{equation}{section}
\def\RR{{\mathbb{R}}}
\def\Q{{\mathbb{Q}}}
\def\A{{\cal A}}
\newtheorem{mydef}{Definition}%[section]
\newtheorem{myth}{Theorem}%[section]
\newtheorem{lemma}{Lemma}%[section]
\newtheorem*{ex1cont*}{Example 1 - cont}
\begin{document}

\title{Put-Call Parities, absence of arbitrage opportunities and non-linear pricing rules\thanks{% 
\textit{Contacts}: \textit{lorenzo.bastianello@u-paris2.fr},  \textit{chateaun@univ-paris1.fr}, \textit{cornet@ku.edu}. 
 We wish to thank Simone Cerreia-Vioglio for very helpful discussions as well as audiences at Lemma, Paris 1, TUS-VII 2021 (PSE), RUD 2021 (Minnesota).}
}
\author{Lorenzo Bastianello$^a$,
Alain Chateauneuf$^{b,c}$, Bernard Cornet$^{d}$ \\
%EndAName
$^{a}${\scriptsize Universit\'e Paris 2 Panth\'eon-Assas, LEMMA, Paris, France}\\
$^{b}${\scriptsize IPAG Business School, Paris, France}\\
$^{c}${\scriptsize Paris School of Economics and Université Paris 1, Paris, France}\\
$^{d}${\scriptsize University of Kansas, Lawrence, Kansas, USA}
}

\date{
%First draft: March 2021 \\
{\large {March 2022}}
}
\maketitle

\begin{abstract}

If prices of assets traded in a financial market are determined by non-linear pricing rules, different versions of the Call-Put Parity have been considered. We show that, under monotonicity, parities between call and put options and discount certificates characterize ambiguity-sensitive  (Choquet and/or \v Sipo\v s) pricing rules, i.e., pricing rules  that can be represented via discounted expectations with respect to non-additive probability measures.  We analyze how non-additivity relates to arbitrage opportunities and we give necessary and sufficient conditions for Choquet and \v Sipo\v s pricing rules to be arbitrage free. Finally, we identify violations of the Call-Put Parity with the presence of bid-ask spreads.

\medskip

\par\noindent
{\sc Keywords:\/} Asset Pricing, Choquet and/or \v Sipo\v s  pricing, No~Arbitrage, Put-Call Parity, Call-Put Parity, Discount Certificate–Call Parity, Market Frictions.
\medskip\par\noindent
{\sc JEL Classification Number:\/} G12, D81, C71.

\end{abstract}

\section{Introduction}\label{sec:intro}

The aim of asset pricing models is to  associate to marketed securities prices that are consistent with the absence of arbitrage opportunities. Typically, the models assume a reference probability over a state space and frictionless pricing rules, as it is done, for instance, in the seminal paper of Harrison and Kreps \cite{HK}. In this case, assets are valued by a linear pricing rule, or, equivalently, as the discounted expectation with respect to the so-called risk-neutral probability. However, both assumptions are questionable. Since at least Ellsberg \cite{Ellsberg}, the decision theory literature suggests a switch to (non-probabilistic) uncertainty. A general framework without an objective probability was already studied in Kreps \cite{Kreps81} and was studied more recently by Riedel \cite{Riedel15}, Cassese \cite{Cassese08}, \cite{Cassese17} and Burzoni, Riedel and Soner \cite{BRS21}. Moreover, important evidence of the presence of frictions in financial markets (transaction costs, taxes, bid-ask spreads) has led to the study of non-linear pricing rules that can account for frictions, as in Garman and Ohlson \cite{GO81}, Prisman \cite{Pris86}, Ross \cite{Ross87}, Bensaid, Lesne, Pages and Scheinkman \cite{ben92} and  Jouini and Kallal \cite{JK95}.

This article drops both assumptions of the existence of an objective probability  and of the linearity of pricing rules. More precisely, following the ideas first developed in  Chateauneuf, Kast and Lapied \cite{CKL}, we consider pricing rules that are Choquet integrals with respect to a \textit{non-additive} risk-neutral probability, i.e., Choquet pricing rules. 
In a recent paper, Cerreia-Vioglio, Maccheroni and Marinacci \cite{CMM} studied financial markets (without a reference probability and with frictions) in which a version of the parity between call and put options is maintained.  They show that, under some basic assumptions, pricing rules satisfying  that parity are precisely Choquet pricing rules. This result is surprising as it connects these two apparently unrelated notions.

However, when prices are non-linear, as for Choquet pricing rules, there are several ways to define the price parity between call and put options. Moreover, without further assumptions on the non-additive risk-neutral probability, arbitrage opportunities may arise. The parity between put and call options is a fundamental non-arbitrage relation in finance, originally derived in Stoll \cite{Stoll}. Call options (denoted here $c^+$ ) and put options ($c^-$) are related by the simple mathematical fact that a real-valued function can be written as the difference between its positive and negative parts: $c=c^+-c^-$. Loosely speaking,  this equality means that an agent who wants to obtain $c$ can buy $c^+$ and sell $c^-$. Obviously, we could write as well $c^-=c^+-c$, and one can get $c^-$ buying $c^+$ and selling $c$. These two different replication strategies were already described  in Stoll \cite{Stoll}, see also Klemkosky and Resnick \cite{KR79}.  

When prices are obtained through non-linear pricing rules, the two non-arbitrage strategies yield two different price parities. The former parity is considered by Cerreia-Vioglio \textit{et al.} \cite{CMM} and it is called  Put-Call Parity (PCP). The latter was analyzed by Chateauneuf  \textit{et al.} \cite{CKL}, and  led to the Call-Put Parity (CPP). 
This paper contributes to the literature of (non-linear) pricing by arbitrage in several ways.  The first part of our paper, Section \ref{sec:parities}, focuses on  put-call parities and how they characterize  non-linear pricing rules. The second part of the paper, Section \ref{sec:NoA}, imposes a non-arbitrage condition and analyzes its implications on non-linear pricing rules.

 Our first contribution in Section \ref{sec:parities} provides the formal relationship between the two parities.  We show that CPP of Chateauneuf \textit{et al.} \cite{CKL} is a stronger assumption: it corresponds to PCP of Cerreia-Vioglio \textit{et al.} \cite{CMM}  plus the absence of bid-ask spreads. Second, we improve the main result of Cerreia-Vioglio \textit{et al.} \cite{CMM}. They proved that a pricing rule satisfies PCP, monotonicity and translation invariance if and only if it is a Choquet pricing rule. Theorem \ref{th:PCP_Choquet} shows that PCP and monotonicity are enough for the characterization. Thus we demonstrate that the connection between PCP and the Choquet pricing rule runs even deeper than suggested by the result of Cerreia-Vioglio \textit{et al.}~\cite{CMM}. Third, we study what happens when CPP replaces PCP. Theorem \ref{th:CPP_Choquet-Sipos} characterizes CPP by a stronger version of Choquet pricing rules, called Choquet-\v Sipo\v s  pricing rules. These are pricing rules that are at the same time Choquet and \v Sipo\v s integrals. The \v Sipo\v s integral is an integral with respect to a non-additive measure that, in general, can differ from the Choquet integral, see \v Sipo\v s  \cite{Sipos}. However when CPP holds, the two  coincide.

The last result of Section \ref{sec:parities} answers  the following natural question.  Is there a financial relation that characterizes \v Sipo\v s pricing rules that are not at the same time Choquet pricing rules? Discount certificates are well known financial options that pays the minimum between the value of an underlying asset and a fixed cap $k$.  We show in Theorem \ref{th:DCP-Sipos} that the two key ingredients to obtain \v Sipo\v s pricing rules are $(i)$ the parity between discount certificates and call options, and $(ii)$ the absence of bid-ask spreads.  To the best of our knowledge, our is the first paper to provide a natural justification for the use of \v Sipo\v s pricing rules for  asset pricing. 

In Section \ref{sec:NoA}, we assume that pricing rules are either Choquet or \v Sipo\v s pricing rules, and we analyze what happens when one imposes the absence of arbitrage opportunities. Intuitively, an arbitrage opportunity is an asset that yields positive payoff and that can be bought at zero or negative cost.
The parity between put and call options is one of the simplest and best understood no-arbitrage relations. However, in general, neither Choquet nor \v Sipo\v s pricing rules  guarantee that markets are arbitrage-free. We study which additional conditions one has to impose in order to eliminate arbitrage opportunities.

 Let $f$ be a Choquet pricing rule used by a dealer to set asset prices in the market. The main result of this section,  Theorem \ref{th:AF_Choquet}  shows that there are no arbitrage opportunities if and only if there exists an  additive risk-neutral measure $\mu$ which is smaller than the  risk-neutral capacity $v$ associated to $f$. The existence of an additive measure $\mu$ below $v$ generalizes the fundamental theorem of asset pricing obtained in the absence of frictions.  On the other hand,  Theorem \ref{th:AF_Choquet} shows that a \v Sipo\v s pricing rule is arbitrage free if and only if it is linear.  This implies that whenever markets do not allow arbitrage opportunities but bid-ask spread are observed, the strong parity CPP of Chateauneuf \textit{et al.} \cite{CKL} must be violated.  
 
 This violation of CPP, in which the marketed price of put options is cheaper than the theoretical one (i.e., the price of building a portfolio that replicates a put), was empirically observed when puts were introduced in  financial markets, see Klemkosky and Resnick \cite{KR79} and  Cremers and Weinbaum~\cite{CW10}. However, it is important to note that  this violation of CPP is consistent with the absence of arbitrage opportunities (and with PCP of Cerreia-Vioglio \textit{et al.}~\cite{CMM}). 

The rest of the paper is organized as follows. Section \ref{sec:framework} introduces the framework and our notation.  Section \ref{sec:parities} studies  call-put parities and the discount certificate-call parity. This section characterizes Choquet, Choquet-\v Sipo\v s and \v Sipo\v s pricing rules. Section \ref{sec:NoA} investigates arbitrage-free Choquet and \v Sipo\v s pricing rules. Section \ref{sec:conclusion} concludes. All proofs are gathered in the Appendix.

\section{The Model}\label{sec:framework}

This paper considers the simple framework of a stochastic two-date model: $t = 0$ (today) is known, $t = 1$ (tomorrow) is uncertain. Uncertainty is represented by a set $\Omega$ (finite or infinite) of \textit{states of nature}, endowed with a $\sigma$-algebra $\A$. One, and only one, state of nature will be realized tomorrow and will be known. Note that we do not assume that there is a reference probability defined on $(\Omega,\A)$.

 A  \textit{payoff}, or  \textit{contingent claim}, is a bounded, real-valued, $\A$-measurable random variable $x:\Omega\rightarrow \R$ (or, if $\Omega$ is finite, a vector $x := (x(\omega))_{\omega\in \Omega}\in \R^{\Omega}$) with $x(\omega)$ representing the payoff (money) at $t = 1$ if state $\omega$ prevails. We denote  $B(\Omega,\A)$ the set of all contingent claims. We adopt the convention that if $x(\omega) < 0$ then $|x(\omega)|$  is paid by the agent and if  $x(\omega) >0$ then $|x(\omega)|$ is received. 
For every $A\in\A$ we denote by $ \one_A$ the payoff in $ B(\Omega,\A)$ defined by $ \one_A(\omega) = 1$ if $\omega\in A$ and $ \one_A(\omega) = 0$ otherwise. We set $ \one_\emptyset = 0$. Note that $B(\Omega,\A)$ comes equipped with the usual partial order, $\geq$, where  $x\geq y$ if and only if $x(\omega)\geq y(\omega)$ for all $\omega\in \Omega$. 
%If $\Omega$ is finite we write $x> y$ if and only if $x\geq y$ and $x\neq y$, and  $x\gg y$ if and only if $x(\omega)> y(\omega)$ for all $\omega\in \Omega$.  
Let $x\in  B(\Omega,\A)$, we denote   $x^+:=x\vee 0$ its \textit{positive part} and $x^-:=(-x)\vee 0$ its \textit{negative part}. Finally we say that two contingent claims $x,y\in B(\Omega,\A)$ are \textit{comonotonic} if   $(x(\omega)-x(\omega'))(y(\omega)-y(\omega'))\geq 0$ for all $\omega,\omega'\in  \Omega$.
%
%{\color{red} We identify a \textit{financial market} with $B(\Omega,\A)$, i.e. we consider a complete market formed by the set of all possible contingent claims.  Note that an agent can buy (positive sign) or sell (negative sign) $\alpha_1,\dots,\alpha_{n+m} \in \R_+$ units of contingent claims $x_1,\dots,x_{n+m}\in B(\Omega,\A)$ to obtain the contingent claim $x:=\sum_{i=1}^n\alpha_i x_i + \sum_{i=n+1}^{n+m}\alpha_i (-x_i)$ in $B(\Omega,\A)$. - A discuter}

A \textit{pricing rule} 
%{\color{red} en fait c'est une definition specifique (also called  replication cost or hedging price) associated with a financial market} 
is a non-zero function $f: B(\Omega,\A)\rightarrow \R$ that associates to every payoff $x\in B(\Omega,\A)$ the price/cost $f(x)$ to be paid today for the
delivery of the random payoff $x$ at $t = 1$.  Given $x\in B(\Omega,\A)$,  $|f(x)|$ is paid if $f(x)>0$ and  received if  $f(x)<0$ hence $f(x)$ is the buying (ask) price, the price one pays to buy $x$, and the (bid) selling price is then  $-f(-x)$ which is the amount received if one sells $x$.
This paper will take the pricing rule $f: B(\Omega,\A)\rightarrow \R$ as a primitive concept. A pricing rule $f$ is
\begin{itemize}
\item \textit{Monotonic:} : $f(x)\geq f(x')$  for all $x, x'\in B(\Omega,\A)$ such that $x\geq x'$.
\item \textit{Positively Homogeneous:}   $f(tx)=tf(x)$ for all $x\in  B(\Omega,\A)$ and all $t\geq0$.
\item \textit{Translation Invariant:} : $f(x+t \one_\Omega)=f(x)+f(t \one_\Omega)$ for all $x\in  B(\Omega,\A)$ and all $t\in\R_+$.
%\item \textit{Frictionless Bond $ \one_\Omega$:} $f(t \one_\Omega)=tf( \one_\Omega)$  for all $t\in\R$.
\item \textit{Subadditive:}    $f(x+x')\leq f(x)+ f(x')$ for all $x, x'\in B(\Omega,\A)$.
%\item \textit{Convex} iff $f(\alpha x+(1-\alpha)x')\leq \alpha f(x)+(1-\alpha)f(x')$ for all $x, x'\in B(\Omega,\A)$, $\alpha\in[0,1]$.
\item \textit{Linear:} $f(\alpha x+\beta x')= \alpha f(x)+\beta f(x')$ for all $x, x'\in B(\Omega,\A)$, $\alpha,\beta\in\R$.
\item \textit{Comonotonicly Additive:}   $f(x+x')= f(x)+ f(x')$ for all $x, x'\in B(\Omega,\A)$ comonotonic.
\item \textit{No Bid-Ask Spread:}   $f(x)=-f(-x)$ for all $x\in B(\Omega,\A)$.
\item \textit{Non-negative Bid-Ask Spread:}   $f(x)\ge -f(-x)$ for all $x\in B(\Omega,\A)$.
\item \textit{Non-negative:}  $x\geq 0$ implies $f(x)\geq 0$. %(rep. $x>0$ implies $f(x)>0$ )
\end{itemize}

%A pricing rule is \textit{frictionless} if and only if it is linear. 
%We will also say that the \textit{bond} $ \one_\Omega$ is frictionless if $f(t \one_\Omega)=tf( \one_\Omega)$  for all $t\in\R$. 
If $f(x)>-f(-x)$ then the price for buying $x$ (bid price) in the market is higher than the price for selling $x$ (ask price). In this case there is a \textit{bid-ask spread}. One can also have a bid-ask spread when 
$f(x)<-f(-x)$ but this situation will not prevail since it generates an arbitrage opportunity (see Section \ref{sec:NoA}). 
 %A pricing rule is odd if and only if there are no bid-ask spreads.

A \textit{capacity}  $v$ on the  measurable space $(\Omega,\A)$ is a set function $v:\A \mapsto \mathbb{R}$ such that $v(\emptyset)=0$ and which is monotone, that is, for all $A,B \in \A,\, A \subseteq B$ implies $v(A)\leq v(B)$. The capacity   is said to be normalized, also called   \textit{non-additive probability}, if $v(\Omega)=1$.  A capacity $v: \A\mapsto \mathbb{R}$ is \textit{concave,} also called submodular, if, for all $A,B\in \A$, $v(A\cup B)+v(A\cap B)\leq v(A)+v(B)$.  
A \textit{probability}  $\mu:\A \mapsto \mathbb{R}$ is a normalized capacity which is (finitely) additive, that is,    $A\cap B=\emptyset$ implies $\mu(A\cup B)=\mu(A)+\mu(B)$. %The \textit{(anti) core} of a capacity $v$ is defined by $ac(v)=\{\mu| \mu \text{ is a probability s.t. } \mu(A)\leq v(A)\,\forall A\in \A \}$. Finally, if $\mu$ is a probability, then we denote the integral with respect to $\mu$ with the usual notation for expectation $\int_{\Omega}{x\,d\mu}=\E_\mu[x]$. 
The conjugate of the  capacity  $v:\A \mapsto \mathbb{R}$ is the capacity $v^*:\A \mapsto \mathbb{R}$ defined by
 $v^*(A):=  v(\Omega)-v(A^c)$ for all $A\in \A$ and we say that the capacity $v$ is \textit{auto-conjugate} if $v=v^*$. Note that a capacity is auto-conjugate if and only if $v(A)+v(A^c)=v(\Omega)$ for all $A\in \A$.

Consider a capacity $v$ on $(\Omega, \A)$. The function $f$ is said to be a \textit{Choquet pricing rule}  with respect to $v$ if
$$
f(x)=\int_{\Omega}^C{x\,dv}:=\int_{-\infty}^0{(v(\{x\geq t\})-v(\Omega))\,dt} + \int_0^{+\infty}{v(\{x\geq t\})\,dt}  \text{ for all }   x\in  B(\Omega,\A),
$$
where $\{x\geq t\}=\{\omega\in\Omega | x(\omega)\geq t\}$. The notation $\int^C$ indicates that the integral is a Choquet integral (we drop subscript $\Omega$ from the integral when no confusion arises).  We will use the notation $\int$ for the standard integral with respect to an additive capacity. 
We note that the capacity $v$  associated with the Choquet pricing rule $f$ is uniquely defined, it  satisfies $ v(A):=f(\one_A)$ for all $A\in \A$, and  $v$ is   called the risk-neutral capacity (associated with  the Choquet pricing rule $f$). %Do we need this terminology????

We say that  $f$ is a \textit{\v Sipo\v s  pricing rule} with respect to $v$ if
$$
f(x)=\int_{\Omega}^S{x\,dv}:= \int^C{x^+\,dv}-\int^C{x^-\,dv} \; \text{ for all }   x\in  B(\Omega,\A).
$$

Note that  Choquet and \v Sipo\v s integrals coincide if $x\in  B(\Omega,\A)$ is non-negative. One fundamental difference between the two integrals is that the former satisfies translation invariance, while the latter has no bid-ask spread.

Finally, we say that $f$ is a \textit{Choquet-\v Sipo\v s pricing rule} when the Choquet  and \v Sipo\v s integrals coincide with $f$ for the same capacity $v$, that is
$$
f(x)=\int_{\Omega}^S{x\,dv} = \int_{\Omega}^C{x\,dv}  \; \text{ for all }   x\in  B(\Omega,\A).
$$

We recall (Proposition \ref{prop:Sipos-charact}) that $f$ is a \textit{Choquet-\v Sipo\v s pricing rule} if and only if $f$ is a \textit{Choquet pricing rule} with respect to an auto-conjugate capacity $v$.

We end this section with a remark. Suppose that $f$ is a Choquet pricing rule with respect to $v$.  We can normalize the capacity $v$, by defining the non-additive probability $\bar v:\A \to \RR$ by $  \bar v(A):= \frac{v(A)}{v(\Omega)} $ for all $A\in \A$, and  obtain a riskless interest rate  $r>-1$ uniquely defined by   $f(\one_\Omega)=\frac{1}{1+r}$.\footnote{Note that $v(\Omega)=f(\one_{\Omega})>0$ otherwise one would obtain the contradiction $f(x)=0$ for all $x\in  B(\Omega,\A)$.} Then a Choquet pricing rule can be written as
$$
f(x)=\frac{1}{1+r}\int^C x \,d\bar v, \, \forall x\in  B(\Omega,\A).
$$
 This (non-additive) probabilistic formulation,  taken from  Cerreia-Vioglio \textit{et al.} \cite{CMM}, allows to interpret the cost $f(x)$ of every payoff $x$ as the present value of its non-additive expectation, where the present value is calculated with the riskless interest rate. The same can be done for \v Sipo\v s and Choquet-\v Sipo\v s pricing rules.

%%%%%%%%%%%%%%%%%%%%%%%%%%%%%%%%%%%%%%%%%%%%%%%%%%%%%%%%%%%%%%%%%%%%%%%%%%%%%%%%
%%% PUT CALL PARITY DCP Parity - CHOQUET and SIPOS REPRESENTION
%%%%%%%%%%%%%%%%%%%%%%%%%%%%%%%%%%%%%%%%%%%%%%%%%%%%%%%%%%%%%%%%%%%%%%%%%%%%%

\section{Characterization of pricing rules through financial parities}\label{sec:parities}

\subsection{Call-Put Parity(ies)}\label{sec:CPP}

This section analyzes the implications on pricing rules of the two different parities between prices of call and put options introduced by Chateauneuf \textit{et al.} \cite{CKL} and Cerreia-Vioglio \textit{et al.} \cite{CMM}. 

A call option with strike $k$ and expiration date $T$ is a financial contract that gives the option buyer the right, but not the obligation, to buy a stock, bond, or commodity at price $k$ at time $T$.  The stock, bond, or commodity is called the underlying asset. A call buyer profits when the underlying asset increases in price.
A put option gives the right, but not the obligation, to sell the underlying asset at price $k$ at time $T$. 

In our two-period financial economy, given an underlying asset $x\in B(\Omega,\A)$, a call option with strike $k\geq 0$ is defined by $c_{x,k}=(x-k\one_\Omega)^+$ and the related put option is defined by $p_{x,k}=(k\one_\Omega-x)^+$. The following equality holds:
\begin{equation}\label{eq:PCP_math}
x=c_{x,k}-p_{x,k}+k\one_\Omega,
\end{equation}
and says that one can replicate the underlying $x$ by buying a call, selling a put with strikes $k$, and buying $k$ units of the bond. Given a pricing rule $f$, Cerreia-Vioglio \textit{et al.} \cite{CMM} used equality in (\ref{eq:PCP_math}) in order to derive the Put-Call Parity (PCP)
\begin{equation}\label{eq:PCP}
\tag{PCP}
f(x)= f(c_{x,k})+f(-p_{x,k})+f(k\one_\Omega).
\end{equation}

Obviously, the mathematical equality in (\ref{eq:PCP_math}) can be re-written as 
\begin{equation}\label{eq:CCP_math}
p_{x,k}=c_{x,k}-x+k\one_\Omega.
\end{equation}
Hence a put with strike $k$ can be replicated  buying a call with the same strike, selling the underlying asset, and buying  $k$ units of the bond. Chateauneuf \textit{et al.} \cite{CKL} used equation (\ref{eq:CCP_math}) to derive a different parity that they named Call-Put Parity (CPP)
\begin{equation}\label{eq:CPP}
\tag{CPP}
f(p_{x,k})=f(c_{x,k})+f(-x)+f(k\one_\Omega).
\end{equation}
If $f$ is linear, \ref{eq:PCP} and \ref{eq:CPP} are equivalent. The following result provides the formal relation between the two parities. It shows that  \ref{eq:CPP} of  Chateauneuf \textit{et al.} \cite{CKL} is stronger than  \ref{eq:PCP} of  Cerreia-Vioglio \textit{et al.} \cite{CMM} as   \ref{eq:CPP} is equivalent to \ref{eq:PCP} and the absence of bid-ask spreads.

\begin{proposition}\label{prop:CCP-PCPandNS}
Let $ f :  B(\Omega,\A)\rightarrow \R$ be a pricing rule.  Then $(i)\Leftrightarrow(ii)$.
\begin{itemize}
\item[(i)] $f$ satisfies \ref{eq:CPP};
\item[(ii)] $f$ satisfies \ref{eq:PCP} and no bid-ask spreads.
\end{itemize}
\end{proposition}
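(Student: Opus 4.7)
The plan is to prove both implications, exploiting the fact that \ref{eq:CPP} and \ref{eq:PCP} are derived from the same underlying algebraic identity $x = c_{x,k} - p_{x,k} + k\one_\Omega$ and differ only in how they rearrange it. The easy direction is $(ii)\Rightarrow(i)$: starting from \ref{eq:PCP}, I use no bid-ask spread to rewrite $f(-p_{x,k}) = -f(p_{x,k})$ and $f(x) = -f(-x)$, and simply rearrange the resulting identity to obtain \ref{eq:CPP}.

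The nontrivial direction is $(i)\Rightarrow(ii)$. The key idea is that one can extract the no bid-ask spread property by specializing \ref{eq:CPP} to the strike $k=0$, which is permitted since $k\geq 0$. With $k=0$ one has $c_{x,0} = x^+$ and $p_{x,0} = x^-$, so \ref{eq:CPP} reduces to
\[
f(x^-) = f(x^+) + f(-x) + f(0).
\]
First I would take $x=0$ in \ref{eq:CPP} (with $k=0$) to get $f(0) = 3 f(0)$, hence $f(0)=0$. Then I apply the displayed identity to $x$ and to $-x$, which yields the two relations $f(x^-) = f(x^+) + f(-x)$ and $f(x^+) = f(x^-) + f(x)$ (using that $(-x)^+ = x^-$ and $(-x)^- = x^+$). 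Adding them cancels $f(x^+) + f(x^-)$ on both sides and leaves $f(x) + f(-x) = 0$, i.e.\ no bid-ask spread.

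Once no bid-ask spread is established, \ref{eq:PCP} follows from \ref{eq:CPP} by the same rearrangement used in the easy direction: substitute $f(p_{x,k}) = -f(-p_{x,k})$ and $f(-x) = -f(x)$ into \ref{eq:CPP} and solve for $f(x)$.

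I do not anticipate any real obstacle. The only point to be careful about is that \ref{eq:CPP} is stated for $k \geq 0$ and arbitrary underlying $x \in B(\Omega,\mathcal{A})$, so both the specialization $k=0$ and the substitution of $-x$ for $x$ (legitimate since $B(\Omega,\mathcal{A})$ is closed under negation) are admissible; after that, everything is a one-line algebraic manipulation.
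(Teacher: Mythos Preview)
Your proposal is correct and follows essentially the same route as the paper's proof: both specialize \ref{eq:CPP} to $k=0$ (after noting $f(0)=0$), apply the resulting identity to $x$ and to $-x$, and combine the two to obtain no bid-ask spread, after which \ref{eq:PCP} follows by substitution. The only cosmetic difference is that you add the two identities while the paper subtracts them, but the argument is the same.
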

\begin{proof}
The proof of Proposition \ref{prop:CCP-PCPandNS} is given in Appendix \ref{proof:CCP-PCPandNS}.
\end{proof}

When frictions are taken into account, linearity of $f$ is no longer guaranteed. Choquet pricing rules are examples of non-linear pricing rules that can explain market frictions. They have been considered first  by Chateauneuf \textit{et al.} \cite{CKL}, see also Wang, Young and Panjer \cite{Wang97} and Castagnoli, Maccheroni and Marinacci \cite{CMM02}, Chateauneuf and Cornet~\cite{CCMP},~\cite{CCET}.
A full characterization using a financial parity has been given  by Cerreia-Vioglio \textit{et al.} \cite{CMM}. The main theorem of Cerreia-Vioglio \textit{et al.} \cite{CMM} proves that a pricing rule $f$ satisfies \ref{eq:PCP}, monotonicity and translation invariance if and only if it is a Choquet pricing rule. Their idea is extremely interesting as it connects two apparently unrelated concepts: the well known financial concept of \ref{eq:PCP}  with the Choquet integral.
 We generalize their result showing that translation invariance is  redundant. 

\begin{myth}\label{th:PCP_Choquet}
Let $ f :  B(\Omega,\A)\rightarrow \R$ be a monotone pricing rule. Then $(i)\Leftrightarrow(ii)$.
\begin{itemize}
\item[(i)] $f$ satisfies \ref{eq:PCP};
\item[(ii)] $f$ is a Choquet pricing rule.%, i.e. there exists a unique  non-additive probability  $v$ and a unique riskless rate $r>-1$ s.t.
%$$
%f(x)=\frac{1}{1+r}\int^C x dv, \, \forall x\in  B(\Omega,\A).
%$$
%Furthermore, $v$ is uniquely defined by $v(A)=\frac{f(\one_A)}{f(\one_{\Omega})}$ for all $ A \in \A$, and $r$ is uniquely defined by $\frac{1}{1+r}=f(\one_{\Omega})$.
\end{itemize}
\end{myth}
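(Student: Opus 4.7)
The plan is to reduce Theorem~\ref{th:PCP_Choquet} to the main result of Cerreia-Vioglio, Maccheroni and Marinacci~\cite{CMM}, which already characterizes Choquet pricing rules as the monotone, translation invariant rules satisfying \ref{eq:PCP}. The key observation that justifies dropping translation invariance from the hypotheses is that it is, in fact, forced by \ref{eq:PCP} alone, with no appeal to monotonicity.

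For the easy direction $(ii)\Rightarrow(i)$, I would check \ref{eq:PCP} directly from comonotonic additivity of the Choquet integral. The three claims $c_{x,k}$, $-p_{x,k}$ and $k\one_\Omega$ are pairwise comonotonic, since the constant $k\one_\Omega$ is comonotonic with every bounded claim and both $c_{x,k}$ and $-p_{x,k}$ are non-decreasing functions of $x$; together with the algebraic identity \eqref{eq:PCP_math}, comonotonic additivity of the Choquet integral then yields \ref{eq:PCP}.

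For the nontrivial direction $(i)\Rightarrow(ii)$, I first recover $f(0)=0$ by plugging $x=0$ and $k=0$ into \ref{eq:PCP}, which gives $f(0)=3f(0)$. To then establish translation invariance, I fix $t\ge 0$ and $x\in B(\Omega,\A)$ and apply \ref{eq:PCP} twice, with strikes chosen so that a common term can be cancelled: once to $y:=x+t\one_\Omega$ with strike $k=t$, which produces $c_{y,t}=x^+$ and $p_{y,t}=x^-$; and once to $x$ itself with strike $k=0$, which produces $c_{x,0}=x^+$ and $p_{x,0}=x^-$. Both identities share the term $f(x^+)+f(-x^-)$, and subtracting them yields $f(x+t\one_\Omega)=f(x)+f(t\one_\Omega)$, which is exactly translation invariance on $\R_+$.

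Once translation invariance is secured, the hypotheses of the characterization result of Cerreia-Vioglio et al.~\cite{CMM} are all in place and immediately deliver that $f$ is a Choquet pricing rule. The only real difficulty is spotting the right pair of strikes so that the common term $f(x^+)+f(-x^-)$ cancels in the two applications of \ref{eq:PCP}; once that trick is found, the derivation of translation invariance from \ref{eq:PCP} is essentially mechanical, and monotonicity plays no role in it, entering only through the appeal to the existing characterization in the last step.
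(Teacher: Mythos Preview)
Your proposal is correct. The derivation of translation invariance from \ref{eq:PCP} alone (apply \ref{eq:PCP} to $x+t\one_\Omega$ with strike $t$, and to $x$ with strike $0$, then subtract) is exactly the argument the paper gives in its Lemma~\ref{lemma:PCP-TI_and_CM}; in particular the paper also observes, as you do, that monotonicity plays no role in this step.

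Where you diverge from the paper is in the completion once TI is in hand. You invoke the characterization of Cerreia-Vioglio \textit{et al.}~\cite{CMM} as a black box, which is perfectly legitimate and makes for a short proof. The paper instead chooses a self-contained route: it shows directly that \ref{eq:PCP} together with monotonicity is equivalent to comonotonic additivity (Lemma~\ref{lemma:PCP-ComoAdd}, via an explicit decomposition of comonotonic step functions, a density argument, and a translation to positive claims), and then appeals to Schmeidler~\cite{Schmeidler86} rather than to Greco's representation underlying~\cite{CMM}. Your approach is more economical; the paper's buys an independently interesting equivalence (\ref{eq:PCP} $\Leftrightarrow$ comonotonic additivity for monotone rules) and avoids relying on~\cite{CMM}, which matters since one of the paper's stated aims is precisely to give a proof technique different from that of~\cite{CMM}.
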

\begin{proof}
The proof of Theorem \ref{th:PCP_Choquet} is given in Appendix \ref{proof:PCP_Choquet}.
\end{proof}

Note that \textit{only} monotonicity and \ref{eq:PCP} are required to pin down Choquet pricing rules. These are very weak and desirable assumptions. This demonstrates the central role played by Choquet pricing rules. From a mathematical point of view, our proof differs from the one of Cerreia-Vioglio \textit{et al.} \cite{CMM}.  Their characterization   is based on a representation result by Greco \cite{Greco}.  On the other hand, our  proof of Theorem \ref{th:PCP_Choquet}    consists in showing that the parity  \ref{eq:PCP} is equivalent to the comonotonic additivity of the pricing rule $f$, a familiar concept in Decision Theory. This in turns  allows us to use the fundamental paper of Schmeidler \cite{Schmeidler86}  characterizing Choquet pricing rules by the comonotonic additivity property. %Finally note that the (non-additive) probabilistic formulation $(ii)$,  taken from  Cerreia-Vioglio \textit{et al.} \cite{CMM}, allows us to interpret the cost $f(x)$ of every payoff $x$ as the present value of its non-additive expectation, where the present value is calculated with respect to the riskless interest rate $r$.

 A natural question arises: what happens when one replaces \ref{eq:PCP} by \ref{eq:CPP}? Theorem~\ref{th:CPP_Choquet-Sipos} shows that  \ref{eq:CPP} (together with  monotonicity)  characterizes Choquet-\v Sipo\v s pricing rules. Before stating the theorem, we give a characterization of Choquet-\v Sipo\v s pricing rules.

 \begin{proposition}\label{prop:Sipos-charact}
Let $ f :  B(\Omega,\A)\rightarrow \R$ be a Choquet pricing rule with respect to $v$.  Then the following are equivalent.
\begin{itemize}
\item[(i)]  $f(x)+f(-x)= 0$ for all $x\in  B(\Omega,\A)$.
\item[(ii)] $v=v^*$, i.e., $v(A)+v(A^c)=v(\Omega)$ for all $A\in \A$.
\item[(iii)] $f$ is a Choquet-\v Sipo\v s pricing rule. 
\end{itemize}
\end{proposition}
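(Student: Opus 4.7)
The plan is to establish the cycle $(i)\Rightarrow(ii)\Rightarrow(iii)\Rightarrow(i)$. The two workhorses are the comonotonic additivity of the Choquet integral, applied to the decomposition $x = x^+ + (-x^-)$ (the functions $x^+$ and $-x^-$ are comonotonic because at every pair of states $(\omega,\omega')$ at most one of them is nonzero, so the product $(x^+(\omega)-x^+(\omega'))(-x^-(\omega)+x^-(\omega'))$ is easily checked to be non-negative), and the conjugate identity $\int^C (-x)\,dv = -\int^C x\,dv^*$, which follows from the change of variable $s=-t$ in the improper integral defining $\int^C(-x)\,dv$ combined with $v^*(B)=v(\Omega)-v(B^c)$.

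For $(i)\Rightarrow(ii)$, I would instantiate condition $(i)$ at $x=\one_A$ for $A\in\A$. On one hand $f(\one_A)=v(A)$ by definition of the risk-neutral capacity. On the other hand, the Choquet formula evaluated at $-\one_A$ yields $f(-\one_A)=v(A^c)-v(\Omega)$, because the level set $\{-\one_A\ge t\}$ equals $A^c$ for $t\in(-1,0]$ and $\Omega$ for $t\le -1$, and only the piece on $(-1,0]$ contributes a nonzero integrand $v(A^c)-v(\Omega)$. Adding the two expressions and invoking $(i)$ produces $v(A)+v(A^c)=v(\Omega)$, i.e.\ $(ii)$. For the easy implication $(iii)\Rightarrow(i)$ I would rely on the fact that the \v Sipo\v s integral is manifestly skew-symmetric: since $(-x)^+ = x^-$ and $(-x)^- = x^+$,
\[
\int^S(-x)\,dv = \int^C x^-\,dv - \int^C x^+\,dv = -\int^S x\,dv,
\]
and since $f=\int^S\cdot\,dv$ by hypothesis this is exactly $f(x)+f(-x)=0$.

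The remaining implication $(ii)\Rightarrow(iii)$ is the main content. By comonotonic additivity applied to $x = x^+ + (-x^-)$,
\[
\int^C x\,dv = \int^C x^+\,dv + \int^C(-x^-)\,dv.
\]
The conjugate identity applied to $x^-\ge 0$ gives $\int^C(-x^-)\,dv = -\int^C x^-\,dv^*$, which under the hypothesis $v=v^*$ equals $-\int^C x^-\,dv$. Substituting back yields $\int^C x\,dv = \int^C x^+\,dv - \int^C x^-\,dv = \int^S x\,dv$, which is $(iii)$.

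The main obstacle is essentially bookkeeping: one must verify comonotonicity of $x^+$ and $-x^-$ carefully, and handle the change of variable in the definition of $\int^C(-x^-)\,dv$ with care, in particular noting that the sets $\{x^-\ge s\}$ and $\{x^->s\}$ differ only at the at-most-countable set of jump values of $s\mapsto v(\{x^-\ge s\})$, which is Lebesgue-null, so their Lebesgue-integrated versions coincide. If one prefers, the conjugate identity can be avoided by proving $(ii)\Rightarrow(i)$ directly (a symmetric calculation on the Choquet representation of $f(x)+f(-x)$ using $v+v\circ(\cdot)^c\equiv v(\Omega)$) and then closing the cycle via the already-established $(iii)\Leftrightarrow(i)$.
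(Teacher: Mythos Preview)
Your cycle $(i)\Rightarrow(ii)\Rightarrow(iii)\Rightarrow(i)$ is correct: the evaluation of $f(\one_A)+f(-\one_A)$, the skew-symmetry of the \v Sipo\v s integral, and the use of comonotonic additivity on $x=x^{+}+(-x^{-})$ together with the conjugate identity $\int^{C}(-y)\,dv=-\int^{C}y\,dv^{*}$ all go through as you describe. The paper takes a different organizational route: it first proves the inequality version (Theorem~\ref{th:CPP_Choquet-Sipos10}, equivalence of $f(x)+f(-x)\ge 0$, $v\ge v^{*}$, and $\int^{S}x\,dv\in[-f(-x),f(x)]$) and then derives Proposition~\ref{prop:Sipos-charact} as a corollary, getting $(i)\Leftrightarrow(ii)$ by applying Theorem~\ref{th:CPP_Choquet-Sipos10} to both $f$ and $f^{*}:=-f(-\cdot)$, and getting $(i)\Rightarrow(iii)$ because when $f(x)+f(-x)=0$ the sandwich interval degenerates to $\{f(x)\}$. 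The underlying identities (comonotonic additivity of $x^{+},-x^{-}$ and the conjugate formula) are the same in both arguments; what differs is that you prove the equality case directly and self-containedly, whereas the paper factors through the one-sided result, which it needs independently for the bid-ask analysis in Section~\ref{sec:NoA}.
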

\begin{proof}
The proof of Proposition \ref{prop:CCP-PCPandNS} is given in Appendix \ref{proof:CPP_Choquet-Sipos}.
\end{proof}
 
Theorem \ref{th:CPP_Choquet-Sipos} below shows that a monotone pricing rule satisfies \ref{eq:CPP} of Chateauneuf \textit{et al.} \cite{CKL} if and only if it is a Choquet-\v Sipo\v s pricing rule.

\begin{myth}\label{th:CPP_Choquet-Sipos}
Let $ f :  B(\Omega,\A)\rightarrow \R$ be a  monotone pricing rule.  Then $(i)\Leftrightarrow(ii)$.
\begin{itemize}
\item[(i)] $f$ satisfies \ref{eq:CPP};
\item[(ii)] $f$ is a Choquet-\v Sipo\v s pricing rule.%, i.e. there exists a unique  non-additive probability  $v$ and a unique riskless rate $r>-1$ s.t.
%$$
%f(x)=\frac{1}{1+r}\int^C x dv=\frac{1}{1+r}\int^S x dv, \, \forall x\in  B(\Omega,\A).
%$$
%Furthermore, $v$ is uniquely defined by $v(A)=\frac{f(\one_A)}{f(\one_{\Omega})}$ for all $ A \in \A$, and $r$ is uniquely defined by $\frac{1}{1+r}=f(\one_{\Omega})$.
\end{itemize}
\end{myth}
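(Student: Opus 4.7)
The plan is to assemble Theorem \ref{th:CPP_Choquet-Sipos} as a direct corollary of the three results that immediately precede it: Proposition \ref{prop:CCP-PCPandNS} (CPP $\Leftrightarrow$ PCP plus no bid-ask spreads), Theorem \ref{th:PCP_Choquet} (monotonicity plus PCP characterizes Choquet pricing rules), and Proposition \ref{prop:Sipos-charact} (a Choquet pricing rule is Choquet-\v Sipo\v s iff $f(x)+f(-x)=0$, i.e.\ iff it has no bid-ask spreads). Everything reduces to threading these three equivalences together.

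For the implication $(i)\Rightarrow(ii)$, I would start from the assumption that $f$ satisfies \ref{eq:CPP}. By Proposition \ref{prop:CCP-PCPandNS}, $f$ then satisfies \ref{eq:PCP} and has no bid-ask spreads. Combining monotonicity of $f$ with \ref{eq:PCP} and invoking Theorem \ref{th:PCP_Choquet}, we conclude that $f$ is a Choquet pricing rule with respect to some capacity $v$. The no-bid-ask-spread condition $f(x)=-f(-x)$ for every $x\in B(\Omega,\A)$ is exactly item (i) of Proposition \ref{prop:Sipos-charact}, which yields item (iii) of that proposition: $f$ is a Choquet-\v Sipo\v s pricing rule.

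For the converse $(ii)\Rightarrow(i)$, suppose $f$ is a Choquet-\v Sipo\v s pricing rule with respect to a capacity $v$. In particular $f$ is a Choquet pricing rule, so by Theorem \ref{th:PCP_Choquet} it satisfies \ref{eq:PCP}. By Proposition \ref{prop:Sipos-charact}, $v$ is auto-conjugate, whence $f(x)+f(-x)=0$ for all $x\in B(\Omega,\A)$, i.e.\ $f$ has no bid-ask spreads. Applying Proposition \ref{prop:CCP-PCPandNS} in the other direction delivers \ref{eq:CPP}.

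There is essentially no main obstacle left in the proof: the genuine work (the equivalence between \ref{eq:PCP} and comonotonic additivity via Schmeidler-type arguments, and the characterization of auto-conjugacy in terms of $f(x)+f(-x)=0$) has been absorbed into Theorem \ref{th:PCP_Choquet} and Propositions \ref{prop:CCP-PCPandNS} and \ref{prop:Sipos-charact}. The only thing worth double-checking is that the capacity appearing in the application of Proposition \ref{prop:Sipos-charact} is the same one produced by Theorem \ref{th:PCP_Choquet}, namely the risk-neutral capacity $v(A)=f(\one_A)$; since the Choquet representation uniquely determines $v$, this is automatic and no ambiguity arises.
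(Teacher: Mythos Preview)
Your proof is correct and follows essentially the same route as the paper: both combine Proposition \ref{prop:CCP-PCPandNS}, Theorem \ref{th:PCP_Choquet}, and Proposition \ref{prop:Sipos-charact} to chain the equivalences \ref{eq:CPP} $\Leftrightarrow$ (\ref{eq:PCP} $+$ no bid-ask spread) $\Leftrightarrow$ (Choquet $+$ no bid-ask spread) $\Leftrightarrow$ Choquet-\v Sipo\v s. The paper states this as a single chain of biconditionals while you split it into the two implications, but the argument is the same.
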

\begin{proof}
The proof of Theorem \ref{th:CPP_Choquet-Sipos} is given in Appendix \ref{proof:CPP_Choquet-Sipos}.
\end{proof}

Note that if assets are priced through Choquet-\v Sipo\v s pricing rules, then there are no bid-ask spreads. This was also noted in a recent paper of L\'ecuyer and Lefort \cite{LL20} who studied particular Choquet pricing rules given by (normalized) generalized neo-additive capacities (GNAC) (see Chateauneuf, Eichberger and Grant \cite{neo}). They show that if $f$ is a Choquet pricing rule, then there are no bid-ask spreads if and only if $v(A)+v(A^c)=v(\Omega)$ for all $A\in  \A$ (i.e., $f$ is Choquet-\v Sipo\v s). Moreover if $f$ is a GNAC pricing rule,  there are no bid-ask spreads if and only if $v(A)+v(A^c)=v(\Omega)$ for at least one $A\in  \A$. See also Castagnoli, Maccheroni and Marinacci \cite{CMM04}.

To summarize, Section \ref{sec:CPP} characterizes Choquet pricing rules through \ref{eq:PCP} and Choquet-\v Sipo\v s pricing rules through \ref{eq:CPP}. Next section characterizes general \v Sipo\v s pricing rules  through the parity between discount certificates and call options.

\subsection{Discount Certificate-Call Parity }\label{sec:DCP}

In order to pin down \v Sipo\v s pricing rules we have to introduce discount certificates. A \textit{discount certificate} on an asset $x$ with cap $k\geq 0$, denoted $d_{x,k}$, is a contingent claim that in state $\omega$ pays $x$ if $x(\omega)\leq k$ and $k$ if $x(\omega)>k$, or equivalently,
$$
d_{x,k}= x\wedge k\one_\Omega.
$$
Noting that $d_{x,k}=(x- k\one_\Omega) \wedge0+ k\one_\Omega$ and recalling that  $c_{x,k}:= [x - k \one_\Omega]^+ =(x-k\one_\Omega)\vee  0$ one can conclude
\begin{equation}\label{eq:DCP_math}
x=  c_{x,k} +d_{x,k}.
\end{equation}
Therefore asset $x$ can be replicated buying a call and a discount certificate. Given a pricing rule $f$, Cerreia-Vioglio \textit{et al.} \cite{CMM} used equation (\ref{eq:DCP_math}) to derive the Discount Certificate-Call Parity (DCP)
\begin{equation}\label{eq:DCP}
\tag{DCP}
f(x)=  f(c_{x,k}) +f(d_{x,k}).
\end{equation}
In their paper they used \ref{eq:DCP}, monotonicity and translation invariance in order to derive another characterization of Choquet pricing rules. In Theorem \ref{th:DCP-Sipos} we show that \ref{eq:DCP} and no bid-ask spreads pin down  \v Sipo\v s pricing rules.
 
\begin{myth}\label{th:DCP-Sipos}
Let $ f :  B(\Omega,\A)\rightarrow \R$ be a  monotone pricing rule.  Then $(i)\Leftrightarrow(ii)$.
\begin{itemize}
\item[(i)] $f$ satisfies no bid-ask spreads and \ref{eq:DCP};
\item[(ii)] $f$ is a \v Sipo\v s pricing rule.%, i.e. there exists a unique  non-additive probability   $v$ and a unique riskless rate $r>-1$ s.t.
%$$
%f(x)=\frac{1}{1+r}\int^{S} x dv, \, \forall x\in  B(\Omega,\A).
%$$
%Furthermore, $v$ is uniquely defined by $v(A)=\frac{f(\one_A)}{f(\one_{\Omega})}$ for all $ A \in \A$, and $r$ is uniquely defined by $\frac{1}{1+r}=f(\one_{\Omega})$.
\end{itemize}
\end{myth}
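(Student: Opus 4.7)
The direction $(ii)\Rightarrow(i)$ is a direct verification from the formula $f(x) = \int^C x^+ dv - \int^C x^- dv$. No bid-ask spread follows from $(-x)^\pm = x^\mp$, and for \ref{eq:DCP} the plan is to use the pointwise identity $x^+ = (x - k\one_\Omega)^+ + (x^+ \wedge k\one_\Omega) = c_{x,k} + d_{x,k}^+$ (the two summands are non-negative and comonotonic, being non-decreasing functions of $x$) together with $d_{x,k}^- = x^-$. Comonotonic additivity of the Choquet integral on $B^+$ then splits $\int^C x^+ dv = \int^C c_{x,k} dv + \int^C d_{x,k}^+ dv$, and subtracting $\int^C x^- dv = \int^C d_{x,k}^- dv$ produces $f(x) = f(c_{x,k}) + f(d_{x,k})$.

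For the converse $(i)\Rightarrow(ii)$, I would set $v(A) := f(\one_A)$ and verify that this is a capacity: monotonicity of $v$ comes from that of $f$, and $v(\emptyset) = 0$ because $f(0) = 0$ by no bid-ask spread applied at $0$. The first reduction applies \ref{eq:DCP} at $k = 0$: since $c_{x,0} = x^+$ and $d_{x,0} = -x^-$, one obtains $f(x) = f(x^+) + f(-x^-) = f(x^+) - f(x^-)$ using no bid-ask spread. Hence it suffices to prove $f(y) = \int^C y \, dv$ for every non-negative $y \in B(\Omega,\A)$, which then yields $f(x) = \int^C x^+ dv - \int^C x^- dv = \int^S x \, dv$, as desired.

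To handle the non-negative cone, I would extract three consequences of \ref{eq:DCP} and conclude by approximation. First, indicator homogeneity: applying \ref{eq:DCP} to $(\alpha+\beta)\one_A$ with threshold $k=\alpha$ (for $\alpha,\beta \geq 0$) yields $c = \beta\one_A$ and $d = \alpha\one_A$, giving the Cauchy equation $f((\alpha+\beta)\one_A) = f(\alpha\one_A) + f(\beta\one_A)$; iteration and monotonicity then extend this to $f(\alpha\one_A) = \alpha v(A)$ for all $\alpha \geq 0$. Second, the Choquet formula on simple non-negative functions: for $y = \sum_{i=1}^n \gamma_i \one_{D_i}$ with $D_1 \supseteq \ldots \supseteq D_n$ and $\gamma_i \geq 0$, applying \ref{eq:DCP} with $k = \sum_{i=1}^{n-1}\gamma_i$ peels off the top layer as $c = \gamma_n \one_{D_n}$ and $d = \sum_{i=1}^{n-1}\gamma_i \one_{D_i}$; induction and the indicator formula deliver $f(y) = \sum_{i=1}^n \gamma_i v(D_i) = \int^C y \, dv$. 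Third, a positive-translation identity on $B^+$: for $y \geq 0$ and $\epsilon \geq 0$, \ref{eq:DCP} applied to $y + \epsilon\one_\Omega$ with threshold $\epsilon$ gives $c = y$ and $d = \epsilon\one_\Omega$, hence $f(y + \epsilon\one_\Omega) = f(y) + \epsilon v(\Omega)$. Finally, for arbitrary non-negative $y$ bounded by $M$, the dyadic simple approximations $y_n = (M/n)\lfloor ny/M\rfloor$ satisfy $y_n \leq y \leq y_n + (M/n)\one_\Omega$; monotonicity combined with the translation identity forces $f(y_n) \leq f(y) \leq f(y_n) + (M/n)v(\Omega)$, and since $f(y_n) = \int^C y_n \, dv$ converges to $\int^C y \, dv$ by the analogous Choquet estimate, the sandwich gives $f(y) = \int^C y \, dv$.

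The main obstacle is the layer-stripping step that produces the Choquet formula for simple non-negative functions from the single parity \ref{eq:DCP}: the trick lies in choosing the threshold $k$ equal to the cumulative layer heights, so that exactly one layer is peeled off at each iteration. Note that one cannot shortcut through Theorem \ref{th:PCP_Choquet} by first deriving \ref{eq:PCP}, because a \v Sipo\v s pricing rule whose underlying capacity is not auto-conjugate satisfies \ref{eq:DCP} and has no bid-ask spread but need not satisfy \ref{eq:PCP}; the capacity $v$ must therefore be constructed directly from $f$ via $v(A) := f(\one_A)$.
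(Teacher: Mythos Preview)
Your proof is correct in both directions, and the overall architecture---reduce to the non-negative cone via \ref{eq:DCP} at $k=0$ plus no bid-ask spread, then show $f$ coincides with the Choquet integral on $B_+(\Omega,\A)$---matches the paper's. The difference lies in how the Choquet representation on $B_+$ is obtained.

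For $(i)\Rightarrow(ii)$, the paper first derives translation invariance on $B_+$ from \ref{eq:DCP} (your third ingredient), then reruns the comonotonic-additivity argument of Lemma~\ref{lemma:PCP-ComoAdd} (Steps~1--2) restricted to $B_+$, and finally invokes Schmeidler's representation theorem to conclude that $f$ is a Choquet integral there. You instead extract the Choquet formula directly: the Cauchy equation $f((\alpha+\beta)\one_A)=f(\alpha\one_A)+f(\beta\one_A)$ gives indicator homogeneity, the layer-stripping (choosing $k$ equal to the cumulative heights) handles nested simple functions, and the sandwich $y_n\le y\le y_n+(M/n)\one_\Omega$ combined with TI$_+$ passes to the limit. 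Your route is more self-contained---it bypasses Schmeidler entirely---while the paper's route yields the stronger intermediate statement of full comonotonic additivity on $B_+$, at the cost of importing more machinery. A minor quibble: your approximants $y_n=(M/n)\lfloor ny/M\rfloor$ are not dyadic, but this is cosmetic.

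For $(ii)\Rightarrow(i)$, your decomposition $x^+=c_{x,k}+d_{x,k}^+$ with $d_{x,k}^-=x^-$ is cleaner than the paper's, which first proves DCP on $B_+$ and then extends to general $x$ by a case analysis on $(x\vee 0)\vee k\one_\Omega$ and $(x\vee 0)\wedge k\one_\Omega$. Both arguments ultimately rest on comonotonic additivity of the Choquet integral on $B_+$, but yours isolates the relevant identity more transparently.
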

\begin{proof}
The proof of Theorem \ref{th:DCP-Sipos} is given in Appendix \ref{proof:DCP-Sipos}.
\end{proof}

As we noted in Section \ref{sec:CPP}, when pricing rules are non-linear it is important to pay attention to the replication strategy, as different strategies imply different parities.  The same is true with the  Discount Certificate-Call Parity. In fact equation (\ref{eq:DCP_math}) can be rewritten as $c_{x,k}=x-d_{x,k}$. Therefore one can replicate a call by buying the underlying $x$ and selling a discount certificate. This replication strategy would suggest the parity
\begin{equation}\label{eq:DCP*}
\tag{DCP*}
f(c_{x,k})=f(x)+f(-d_{x,k}),
\end{equation}
which may differ from \ref{eq:DCP} if $f$  is not linear. Proposition \ref{prop:DCP(DCP*} relates \ref{eq:DCP} and \ref{eq:DCP*} in the same way Proposition \ref{prop:CCP-PCPandNS} relates \ref{eq:PCP} and \ref{eq:CPP}.

\begin{proposition}\label{prop:DCP(DCP*}
Let $ f :  B(\Omega,\A)\rightarrow \R$ be a non-zero pricing rule.  Then $(i)\Leftrightarrow(ii)$.
\begin{itemize}
\item[(i)] $f$ satisfies \ref{eq:DCP*};
\item[(ii)] $f$ satisfies \ref{eq:DCP} and no bid-ask spreads.
\end{itemize}
\end{proposition}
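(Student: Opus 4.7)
My plan is to prove the two implications separately, using the same $k=0$ / $k$ large trick as in the proof of Proposition \ref{prop:CCP-PCPandNS}. The easy direction is $(ii)\Rightarrow(i)$: assuming \ref{eq:DCP} and no bid-ask spread, I can rewrite the defining equality $f(x)=f(c_{x,k})+f(d_{x,k})$ as $f(c_{x,k})=f(x)-f(d_{x,k})$ and then substitute $-f(d_{x,k})=f(-d_{x,k})$ to land on \ref{eq:DCP*}.

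The substantive direction is $(i)\Rightarrow(ii)$. First I would extract no bid-ask spread from \ref{eq:DCP*} alone. Applying \ref{eq:DCP*} with $x\equiv 0$ and any $k\geq 0$ gives $c_{0,k}=0$ and $d_{0,k}=0$, so \ref{eq:DCP*} reads $f(0)=f(0)+f(0)$, hence $f(0)=0$. Now fix an arbitrary $x\in B(\Omega,\A)$ and choose $k\geq 0$ large enough that $k\geq \sup_\omega x(\omega)$ (possible since $x$ is bounded). For such $k$ one has $c_{x,k}=(x-k\one_\Omega)^+=0$ and $d_{x,k}=x\wedge k\one_\Omega=x$. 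Therefore \ref{eq:DCP*} yields
\[
0=f(0)=f(c_{x,k})=f(x)+f(-d_{x,k})=f(x)+f(-x),
\]
which is precisely no bid-ask spread. Once no bid-ask spread is established, the remaining step is immediate: plug $f(-d_{x,k})=-f(d_{x,k})$ back into \ref{eq:DCP*} to recover $f(x)=f(c_{x,k})+f(d_{x,k})$, i.e., \ref{eq:DCP}.

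There is no real obstacle; the only point requiring a tiny bit of care is the legitimacy of choosing $k$ with $k\geq\sup_\omega x(\omega)$. This is fine because payoffs in $B(\Omega,\A)$ are by assumption bounded and $k$ is allowed to be any non-negative real, so one can always take, e.g., $k=\|x^+\|_\infty$. Note also that the argument does not rely on monotonicity or on any other property listed in Section \ref{sec:framework}, which matches the statement of the proposition (only $f$ being non-zero is assumed, and that is used only implicitly, to guarantee the framework is well-posed).
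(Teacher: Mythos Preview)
Your proof is correct. The $(ii)\Rightarrow(i)$ direction is immediate as you say, and for $(i)\Rightarrow(ii)$ your use of $x=0$ to get $f(0)=0$ followed by the choice of $k\ge\sup_\omega x(\omega)$ (so that $c_{x,k}=0$ and $d_{x,k}=x$) cleanly yields $f(x)+f(-x)=0$; \ref{eq:DCP} then falls out.

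This differs slightly from what the paper has in mind. The paper omits the proof but points to Proposition~\ref{prop:CCP-PCPandNS}, where no bid-ask spread is obtained by specializing to $k=0$ and then swapping $x$ with $-x$. Transported to the present setting, $k=0$ in \ref{eq:DCP*} gives $f(x^+)=f(x)+f(x^-)$, and replacing $x$ by $-x$ yields $f(x^-)=f(-x)+f(x^+)$; adding the two gives $f(x)+f(-x)=0$. Your argument instead exploits the opposite extreme, $k$ large, and obtains NS in a single shot without the symmetry step. Both routes are equally elementary; yours is arguably more economical, while the paper's stays closer to the template of Proposition~\ref{prop:CCP-PCPandNS}. (A minor aside: Proposition~\ref{prop:CCP-PCPandNS} itself only uses $k=0$, not large $k$, so your opening sentence slightly overstates the parallel.)
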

\begin{proof}
The proof is similar to the one of Proposition \ref{prop:CCP-PCPandNS} and it is therefore omitted.
\end{proof}

While the parities between call and put options and discount certificate and call options are some of the best known assumptions about the absence of arbitrage opportunities, the pricing rules obtained in Theorem \ref{th:PCP_Choquet}, Theorem \ref{th:CPP_Choquet-Sipos} and Theorem \ref{th:DCP-Sipos} do not guarantee that markets are arbitrage free.  The following section defines arbitrage opportunities and shows how one can eliminate them.

%%%%%%%%%%%%%%%%%%%%%%%%%%%%%%%%%%%%%%%%%%%%%%%%%%%%%%%%%%%%%%%%%%%%%%%%%%%%%%%%
%%% ABSENCE OF ARBITRAGE OPPORTUNITIES
%%%%%%%%%%%%%%%%%%%%%%%%%%%%%%%%%%%%%%%%%%%%%%%%%%%%%%%%%%%%%%%%%%%%%%%%%%%%%

\section{Absence of Arbitrage Opportunities}\label{sec:NoA}

In the characterizations given in  Section \ref{sec:parities}, no property is stated on the capacity $v$ associated to the pricing rule $f$. Without additional properties on $v$, Choquet and \v Sipo\v s pricing rules may leave room for some arbitrage opportunity.  

Intuitively, a payoff $x\in B(\Omega,\A)$ is an arbitrage opportunity if ``it allows to make money from nothing''. Note that whenever $f$ is assumed to be subadditive, the pricing rule $f$ is said to be arbitrage-free whenever $x\ge 0$ implies that $f(x)\ge 0$, that is, there is no payoff   $x\ge0$ (with no loss at each state tomorrow) such that $f(x)<0$   (with a gain today). Without the subadditivity assumption, there is a need to eliminate  other arbitrage opportunities. Consider for the moment \textit{buy $\&$ sell arbitrage opportunities}, that is payoffs $x\in  B(\Omega,\A)$ for which $f(x) +f(-x)<0$ leading to a zero stream of money tomorrow ($x-x=0$) and a gain today.  The  following result characterizes Choquet  pricing rules for which there is no buy $\&$ sell arbitrage opportunity. Finally note that the absence of buy $\&$ sell arbitrage opportunities is equivalent to the property  Non-negative Bid-Ask Spreads, i.e., $f(x)+f(-x)\ge 0$ for all $x\in  B(\Omega,\A)$.\footnote{Note that buy $\&$ sell arbitrage opportunities cannot arise for  \v Sipo\v s and Choquet-\v Sipo\v s pricing rules as these pricing rules satisfy No Bid-Ask Spread.}

\begin{myth}\label{th:CPP_Choquet-Sipos10} %%%%%%
Let $ f :  B(\Omega,\A)\rightarrow \R$ be a Choquet  pricing rule with respect to $v$.  Then the following are equivalent.
%$(i)\Leftrightarrow(ii)\Leftrightarrow (iii)\Leftrightarrow (iv)$.
\begin{itemize}
\item[(i)] $f$ has Non-negative Bid-Ask Spreads; %   : $f(x)+f(-x)\ge 0$ for all $x$;
\item[(ii)] $v\geq v^*$, i.e., $v(A)+v(A^c)\ge v(\Omega)$ for all $A\in \A$;
\item[(iii)]  $-f(-x)\le  \int^{S}x \, dv\le f(x)$ for all $x\in  B(\Omega,\A)$.
\end{itemize}
\end{myth}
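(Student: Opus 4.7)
The plan is to prove $(i)\Leftrightarrow(ii)$ directly and then close the loop via $(ii)\Rightarrow(iii)\Rightarrow(i)$. The central tool is the conjugacy identity for the Choquet integral,
\[
f(-x) \;=\; \int^{C}(-x)\,dv \;=\; -\int^{C} x\,dv^{*}, \qquad x\in B(\Omega,\A),
\]
which I would derive from the change of variable $t\mapsto -t$ in the definition of $\int^{C}$ together with $v(\{-x\geq t\})=v(\{x\leq -t\})=v(\Omega)-v^{*}(\{x>-t\})$; equivalently $-f(-x)=\int^{C}x\,dv^{*}$. Since $x^{+}$ and $-x^{-}$ are comonotonic (on every pair $\omega,\omega'$ at least one of the two functions is constant) and the Choquet integral is comonotonically additive, one obtains the decomposition
\[
f(x) \;=\; \int^{C}x^{+}\,dv + \int^{C}(-x^{-})\,dv \;=\; \int^{C}x^{+}\,dv - \int^{C}x^{-}\,dv^{*},
\]
which I would then compare against $\int^{S}x\,dv=\int^{C}x^{+}\,dv-\int^{C}x^{-}\,dv$.

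For $(i)\Leftrightarrow(ii)$: the conjugacy identity rewrites Non-negative Bid-Ask Spreads as $\int^{C}x\,dv\geq\int^{C}x\,dv^{*}$ for every $x\in B(\Omega,\A)$. Testing this on indicators $x=\one_{A}$ gives $v(A)\geq v^{*}(A)$, which is $(ii)$. Conversely, assuming $(ii)$, the equality $v(\Omega)=v^{*}(\Omega)$ together with $v\geq v^{*}$ on $\A$ yields $v(\{x\geq t\})-v(\Omega)\geq v^{*}(\{x\geq t\})-v^{*}(\Omega)$ for every $t$; integrating over $t$ gives $\int^{C}x\,dv\geq\int^{C}x\,dv^{*}$, i.e.\ $f(x)\geq -f(-x)$.

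For $(ii)\Rightarrow(iii)$: subtracting the Šipoš formula from the decomposition obtained above yields
\[
f(x)-\int^{S}x\,dv \;=\; \int^{C}x^{-}\,dv - \int^{C}x^{-}\,dv^{*} \;\geq\; 0,
\]
because $x^{-}\geq 0$ and $\int^{C}\cdot\,d\mu$ is monotone in $\mu$ on non-negative integrands. The symmetric calculation, using $-f(-x)=\int^{C}x\,dv^{*}=\int^{C}x^{+}\,dv^{*}-\int^{C}x^{-}\,dv$ (the analogous decomposition with $v$ and $v^{*}$ interchanged, noting $v^{**}=v$), gives
\[
\int^{S}x\,dv - \bigl(-f(-x)\bigr) \;=\; \int^{C}x^{+}\,dv - \int^{C}x^{+}\,dv^{*} \;\geq\; 0,
\]
and chaining both inequalities produces $(iii)$. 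Finally $(iii)\Rightarrow(i)$ is immediate by transitivity, $-f(-x)\leq f(x)$. The only delicate point is recognizing the decomposition $\int^{C}x\,dv=\int^{C}x^{+}\,dv-\int^{C}x^{-}\,dv^{*}$: the asymmetric appearance of $v^{*}$ on the negative part is precisely what distinguishes the Choquet from the Šipoš integral, and once this identity is in hand the rest of the proof reduces to monotonicity of the Choquet integral in the capacity applied to the non-negative integrands $x^{\pm}$.
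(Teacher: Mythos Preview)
Your proof is correct and follows essentially the same approach as the paper: both rely on comonotonic additivity to split $f(x)$ along $x^{\pm}$ and on the conjugacy identity $\int^{C}(-y)\,dv=-\int^{C}y\,dv^{*}$ to compare with the \v Sipo\v s integral. The only difference is organizational: the paper proves $(i)\Rightarrow(iii)$ directly by writing $f(x)-\int^{S}x\,dv=f(x^{-})+f(-x^{-})\geq 0$ and invoking $(i)$, whereas you route through $(ii)$ and invoke monotonicity of the Choquet integral in the capacity---but this is literally the same inequality written two ways, since $f(x^{-})+f(-x^{-})=\int^{C}x^{-}\,dv-\int^{C}x^{-}\,dv^{*}$.
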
 %%%%%%
\begin{proof}
The proof of Theorem \ref{th:CPP_Choquet-Sipos10} is given in Appendix \ref{proof:CPP_Choquet-Sipos}.
\end{proof}

Note that the previous theorem, part of which was proved when $\Omega $ is finite by    Chateauneuf and  Cornet \cite{CCETB} introduces the new property that the \v Sipo\v s pricing rule is an homogeneous selection of the bid-ask spread interval, that is,  $\int^{S}x \, dv \in   [-f(-x), f(x)]$ for all $x\in  B(\Omega,\A)$.  Pricing a new security $x\in  B(\Omega,\A)$ so that the enriched pricing rule remains buy $\&$ sell  arbitrage-free amounts to choose the price of the new security in the interval  $[-f(-x), f(x)]$; thus having a canonical way to do it via the \v Sipo\v s pricing rule is both interesting for theoretical reasons but also for practical ones since the calculation is standard.

The notion of buy $\&$ sell arbitrage opportunity can be generalized in the spirit of Chateauneuf and Cornet \cite{CCETB}.

\begin{mydef}\label{def:AF}
$f$ is \textit{Arbitrage Free ($AF$)} if for all $n \in\N$ and all  $x_1,\dots,x_{n }\in  B(\Omega,\A)$  
\begin{equation}\label{eq:AF}
\tag{AF}
\sum_{i=1}^n x_i\ge 0 \,\, \Rightarrow  \,\, \sum_{i=1}^n\ f(x_i)\ge 0.
%\sum_{i=1}^n\alpha_ix_i+\sum_{j=n+1}^{n+m}\alpha_j(-x_j) \geq 0 \,\, \Rightarrow  \,\, \sum_{i=1}^n\alpha_if(x_i)+\sum_{j=n+1}^{n+m}\alpha_jf(-x_j) \geq 0 
 \end{equation}
 \end{mydef} 
 
We say equivalently that there are no arbitrage opportunities, that the market is Arbitrage Free (AF) or that a pricing rule $f$ satisfies the AF condition. The interpretation of Definition \ref{def:AF} is the following. Suppose that a customer wants to construct an asset $x\ge 0$, buying $n$ securities $x_1,\dots,x_n$. Then the pricing rule is AF if and only if the cost of forming this portfolio in non-negative.  Thus we can also interpret it as the absence of {\it multiple}  buy $\&$ sell arbitrage opportunities. Note that Definition \ref{def:AF} is stronger than   the standard  definition in which $f$ is called arbitrage free if and only if $x\geq0$ implies $f(x)\geq0$.

Note that when $f$ is subadditive (hence when $f$ is linear too), then it is arbitrage free if and only if it is non-negative (i.e., precisely when $x\geq0$ implies $f(x)\geq0$). The fundamental theorem of asset pricing famously characterizes linear and AF pricing rules as discounted expectation with respect to a (additive) probability, see Harrison and Kreps~\cite{HK}.

When frictions are taken into account, the linearity of $f$ is no longer guaranteed. In general, non-linear pricing rules do not guarantee the absence of arbitrage opportunities. To solve this issue, usually  pricing rules are required to be sublinear, i.e., to satisfy positive homogeneity and subadditivity, see Jouini and Kallal \cite{JK95}, Castagnoli \textit{et al.} \cite{CMM02}, Araujo, Chateauneuf, and Faro \cite{ACF18}, Burzoni \textit{et al.} \cite{BRS21} and Chateauneuf and Cornet~\cite{CCMP}. Choquet pricing rules satisfy positive homogeneity, but, in general, they are not subadditive. Subadditive Choquet pricing rules were  considered by Chateauneuf \textit{et al.} \cite{CKL} and are characterized by concave capacities, see also Chateauneuf and Cornet \cite{CCET},  Cerreia-Vioglio \textit{et al.} \cite{CMM} and Cinfrignini, Petturiti and Vantaggi \cite{Cin21} for the particular case of belief functions. Concavity of the capacity gives a sufficient condition in order to guarantee the absence of arbitrage opportunities for Choquet pricing rules. However the condition  is not necessary.

Theorem \ref{th:AF_Choquet} provides a full characterization of AF for both Choquet and \v Sipo\v s pricing rules. For Choquet  pricing rules there are no arbitrage opportunities if and only if there exists a positive and additive set function $\mu$ below the capacity $v$. Formally, the \textit{anticore} of a capacity $v$ (already considered in Gilboa and Lehrer \cite{GL91} and used in the context of pricing rules by Araujo, Chateauneuf and Faro \cite{ACF12}) is defined as
\begin{equation}\label{eq:AC}
 \tag{AC}
 AC(v)=\{\mu:\A \to \mathbb{R}_+,\,\mu \text{ is   additive, } \mu\leq v, \text{ and } \mu(\Omega)=v(\Omega)\}.
\end{equation}
It is well known that if a capacity $v$ is concave, then the associated Choquet pricing rule $f$ is subadditive and  $AC(v)\neq\emptyset$. Theorem \ref{th:AF_Choquet} shows that if $f$ is a Choquet pricing rule with respect to $v$ then it is AF if and only if $AC(v)\neq\emptyset$. A similar result was proved in  Chateauneuf and Cornet \cite{CCETB} when $\Omega $ is finite. However, the proof for the case of  $\Omega $ infinite is different.

Theorem \ref{th:AF_Choquet} also characterizes  AF \v Sipo\v s pricing rules. As it turns out, a \v Sipo\v s pricing rule is AF if and only if it is linear. Therefore Sipo\v s pricing rules that are also AF  cannot take into account \textit{any} friction.

\begin{myth}\label{th:AF_Choquet}
Let $f :  B(\Omega,\A)\rightarrow\R$ be a pricing rule. Then:
\begin{enumerate}
\item[(i)] If $f$ is a  Choquet  pricing rule, then $f$ satisfies $AF$ if and only if $AC(v)\neq\emptyset$.
\item[(ii)]   If $f$ is a \v Sipo\v s pricing rule, then $f$ satisfies AF  if and only if $f$ is linear.
\end{enumerate}
\end{myth}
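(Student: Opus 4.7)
The plan is to treat the two parts separately, starting with the easy implications. For part (ii), $(\Leftarrow)$ is immediate: if $f$ is linear then the underlying capacity is additive, and the resulting expectation functional is non-negative on non-negative payoffs, hence trivially AF. For the easy direction of part (i), given $\mu \in AC(v)$, I would first establish the pointwise inequality $\int x\, d\mu \leq \int^C x\, dv = f(x)$ for every $x\in B(\Omega,\A)$, which follows from the layer-cake formula together with $\mu\leq v$ and $\mu(\Omega)=v(\Omega)$ (the latter handles the $t<0$ part, where $v(\{x\ge t\})-v(\Omega) \ge \mu(\{x\ge t\})-\mu(\Omega)$). Then if $\sum_{i=1}^n x_i\ge 0$, additivity of $\mu$ yields $\sum_i f(x_i)\ge \sum_i \int x_i\, d\mu = \int \sum_i x_i\, d\mu \ge 0$.

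For part (i), $(\Rightarrow)$, the plan is a Hahn--Banach sandwich argument. Define on $B(\Omega,\A)$ the sublinear envelope
\begin{equation*}
g(x) := \inf\Bigl\{\sum_{i=1}^n f(x_i) : n\in\N,\ x_1,\ldots,x_n\in B(\Omega,\A),\ \sum_{i=1}^n x_i \ge x\Bigr\}.
\end{equation*}
Appending $-x$ to any competing decomposition makes the sum non-negative, so \ref{eq:AF} gives $g(x)\ge -f(-x)>-\infty$. Positive homogeneity of $g$ follows from that of $f$ (Choquet is positively homogeneous), and subadditivity from concatenating near-optimal decompositions. Apply the algebraic Hahn--Banach theorem to obtain a linear functional $\ell:B(\Omega,\A)\to\R$ with $\ell\le g\le f$, and set $\mu(A):=\ell(\one_A)$. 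Linearity of $\ell$ gives finite additivity of $\mu$; non-negativity follows because $\ell(-x)\le g(-x)\le f(0)=0$ whenever $x\ge 0$ (use the one-term decomposition $x_1=0$); the bound $\mu\le v$ follows from $\ell(\one_A)\le f(\one_A)=v(A)$. For $\mu(\Omega)=v(\Omega)$ one must use AF in both directions: by appending $\pm\one_\Omega$ to decompositions, show $g(\one_\Omega)=v(\Omega)$ and $g(-\one_\Omega)=-v(\Omega)$, then $\ell(\one_\Omega)\le g(\one_\Omega)=v(\Omega)$ and $\ell(\one_\Omega)=-\ell(-\one_\Omega)\ge -g(-\one_\Omega)=v(\Omega)$, forcing equality.

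For part (ii), $(\Rightarrow)$, the plan is to force $v$ to be additive. Recall that for a \v Sipo\v s pricing rule $f(-\one_C)=-v(C)$ (no bid-ask spread). For disjoint $A,B\in\A$, apply AF to the portfolio $\{\one_A,\one_B,-\one_{A\cup B}\}$, whose sum vanishes, to deduce $v(A)+v(B)-v(A\cup B)\ge 0$; then apply AF to $\{-\one_A,-\one_B,\one_{A\cup B}\}$ to get the reverse inequality. Hence $v$ is finitely additive, which makes $\int^S\cdot\, dv=\int\cdot\, dv$ a linear functional, so $f$ is linear.

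The main obstacle is the sandwich step in part (i), specifically verifying $\mu(\Omega)=v(\Omega)$; once the envelope $g$ has been shown sublinear and finite (the key use of AF), the remainder is standard. The \v Sipo\v s case (ii) is conceptually lighter because the rigid relation $f(x)+f(-x)=0$ turns simple three-asset portfolios into direct tests of additivity of $v$.
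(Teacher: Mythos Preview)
Your argument is correct, and the easy implications match the paper's. The two hard directions, however, follow genuinely different routes.

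For part (i) $(\Rightarrow)$, the paper works on the dual capacity $v^*$: from AF applied to collections $\{a_i\one_{A_i}\}\cup\{-\one_\Omega\}$ with $\sum a_i\one_{A_i}=\one_\Omega$, it deduces that $v^*$ is balanced, and then invokes Schmeidler's infinite-dimensional Bondareva--Shapley theorem to conclude $\core(v^*)=AC(v)\neq\emptyset$. Your Hahn--Banach sandwich via the sublinear envelope $g$ is a different and self-contained route that avoids the external balancedness theorem; the verification that $g(\pm\one_\Omega)=\pm v(\Omega)$ (using $f(-\one_\Omega)=-v(\Omega)$ for the Choquet integral) is exactly what pins down $\mu(\Omega)=v(\Omega)$, and you handle it correctly. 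The paper's approach makes the game-theoretic structure explicit; yours is arguably more elementary and needs no auxiliary result.

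For part (ii) $(\Rightarrow)$, the paper first shows $v$ is auto-conjugate (via AF on $\{\one_\Omega,-\one_A,-\one_{A^c}\}$ and its negative), then appeals to part (i) to get some $\mu\in AC(v)$, and finally argues that auto-conjugacy together with $\mu\le v$, $\mu(\Omega)=v(\Omega)$ forces $\mu=v$, so $v$ is additive. Your approach is more direct: testing AF on $\{\one_A,\one_B,-\one_{A\cup B}\}$ and its negative for disjoint $A,B$ immediately yields $v(A\cup B)=v(A)+v(B)$, so $v$ is additive without passing through part (i) or auto-conjugacy. This is shorter and decouples (ii) from (i); the paper's detour, on the other hand, isolates auto-conjugacy as an intermediate fact of independent interest.
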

\begin{proof}
The proof of Theorem \ref{th:AF_Choquet} is given in Appendix \ref{proof:AF_Choquet}.
\end{proof}

To conclude,   we obtain that under the mild assumption of monotonicity and AF the unique parity which would lead to frictions in accordance with  observed  bid-ask spreads  is  \ref{eq:PCP} considered in  Cerreia-Vioglio \textit{et al.} \cite{CMM}. We remark that under \ref{eq:PCP}, monotone pricing rules are of the Choquet type. If moreover non-negative bid-ask spreads are present in the market, then it turns out that $f(p_{x,k})\leq f(c_{x,k})+f(-x)+f(k\one_\Omega)$ for all $x\in B(\Omega,\A)$, $k\geq 0$, i.e., a violation of  \ref{eq:CPP} considered by Chateauneuf \textit{et al.} \cite{CKL}. This price violation says that it is cheaper to buy a put option with strike $k$ in the market, rather than ``constructing" one by buying a call, selling the underlying $x$ and buying $k$ units of the bond. This miss-pricing was actually observed when put options were introduced in the markets, see Gould and Galai  \cite{GouldGalai}, Klemkosky and Resnick \cite{KR79} and Sternberg \cite{Sternberg}.

\section{Conclusion}\label{sec:conclusion}

The first part of our paper studies  different formulations of the famous put-call parity in a framework where pricing rules are non-linear. Cerreia-Vioglio \textit{et al.} \cite{CMM} study a parity called  Put-Call Parity (PCP). They prove that PCP, together with translation invariance and monotonicity, characterizes Choquet pricing rules. Chateauneuf  \textit{et al.} \cite{CKL} study a  different Call-Put Parity (CPP). Our first result studies the relation between the two parities and shows that CPP is equivalent to PCP and no bid-ask spreads. Our second result improves the characterization of  Cerreia-Vioglio \textit{et al.} \cite{CMM} as it shows that  translation invariance is redundant. Third, replacing PCP by CPP, we obtain Choquet-\v Sipo\v s  pricing rules, which are pricing rules that are at the same time Choquet and \v Sipo\v s pricing rules. Fourth, we characterize \v Sipo\v s  pricing rules using the Discount Certificate-Call parity and no bid-ask spreads.

The second part of our paper studies the implication of an arbitrage free (AF) condition on pricing rules. We show that, in general, Choquet and \v Sipo\v s  pricing rules are not AF. If $f$ is a Choquet pricing rule with respect to a capacity $v$, then the market is AF if and only if the anticore of $v$ is non-empty. If $f$ is a \v Sipo\v s  pricing rule then  $f$ is AF if and only if it is linear. 
Finally, we characterize bid-ask spreads as violations of CPP. This shows that, for pricing rules \textit{\`a la} Choquet, one can observe at the same time PCP, absence of arbitrage opportunities and a violation of CPP (i.e., non-negative bid-ask spreads).

%%%%%%%%%%%%%%%%%%%%%%%%%%%%%%%%%%%%%%%%%%%%%%%%%%%%%%%%%%%%%%%%%%%%%%%%%%%%%%%%%%%%%%%%%
%% APPENDIX
%%%%%%%%%%%%%%%%%%%%%%%%%%%%%%%%%%%%%%%%%%%%%%%%%%%%%%%%%%%%%%%%%%%%%%%%%%%%%%%%%%%%%%%%%%%%%%

\appendix
\section{Appendix}
\small

\subsection{Summary of the properties of pricing rules}

We gather here the properties of pricing rules used in the proofs and their abbreviations. Also, recall that $c_{x,k}=(x-k \one_\Omega)^+$, $p_{x,k}=(k \one_\Omega-x)^+$ and $d_{x,k}= x\wedge k \one_\Omega$. We will need the following basic equalities:
\begin{equation}\label{eq:+_vee_w}
(x-k \one_\Omega)^+=x\vee k \one_\Omega-k \one_\Omega \text{, } -(k \one_\Omega-x)^+= x\wedge k \one_\Omega-k \one_\Omega, \text{ and } x\wedge k \one_\Omega= (x- k \one_\Omega) \wedge0+ k \one_\Omega.
\end{equation}

\begin{enumerate}
\item Monotonicity (M). $f(x)\geq f(x')$ for all $x\geq x'$.
\item Translation Invariance (TI). $f(x+k \one_\Omega)=f(x)+f(k \one_\Omega)$ for all $x\in  B(\Omega,\A)$ and all $k\in \R_+$.
%item Constant Modularity (CM). $f(x\vee k \one_\Omega)+f(x\wedge k \one_\Omega)=f(x)+f(k \one_\Omega)$ for all $x\in B(\Omega,\A)$ and all $k\in \R$. 
%\item Weak Translation Invariance (WTI) $f(x-k \one_\Omega)=f(x)-f(k \one_\Omega)$ for all $k\geq 0$ and all $x\in B(\Omega,\A)$ s.t. $x\geq k \one_\Omega$.
%\item Frictionless Bond (FB). $f(k \one_\Omega)=kf( \one_\Omega)$ for all $k\in\R$.
\item Put-Call Parity (PCP). $f(x)= f((x-k \one_\Omega)^+)+f(-(k \one_\Omega-x)^+)+f(k \one_\Omega)$ for all $x\in  B(\Omega,\A)$ and all $k\in \R_+$.
\item  Call-Put Parity (CPP). $f((k \one_\Omega-x)^+)=f((x-k \one_\Omega)^+)+f(-x)+f(k \one_\Omega)$ for all $x\in  B(\Omega,\A)$ and all $k\in \R_+$.

\item \nonumber $f((k \one_\Omega-x)^+)=f((x-k \one_\Omega)^+)+f(-x)+f(k \one_\Omega)$ for all $x\in  B(\Omega,\A)$ and all $k\in \R_+$.
\item  Discount Certificate - Call Parity  (DCP). $ f(x) = f((x-k \one_\Omega)^+) +f(x\wedge k \one_\Omega)$ for all $x\in B(\Omega,\A)$, all $k\ge 0$.
%\item TI$_+$, CM$_+$ and DCP$_+$  if TI, CM and DCP hold only for $x\geq 0$.
\item No-Bid Ask Spread (NS).
%\footnote{We called this property Odd in Section \ref{sec:framework}.} 
$f(-x) =-f(x)$ for all $x\in B(\Omega,\A)$. 
\end{enumerate}

%\subsection{Proofs}

%\begin{center}
\subsection{Proof of Proposition \ref{prop:CCP-PCPandNS}}\label{proof:CCP-PCPandNS}
%\end{center}

\begin{proof}
$\bullet$ $[(i) \Rightarrow (ii)]$ We prove NS. Note that $f(0)=0$ is a consequence of CPP (taking $x=0$ and $k=0$). 
%Actually taking $x=0$ and $k=0$ one gets $f(0)=f(0)+f(0)+f(0)$ and hence $f(0)=0$. 
Take now $k=0$ in CPP, then $f((-x)^+)= f(x ^+)  + f(-x)$ for all $x\in  B(\Omega,\A)$. Replacing $x$ by $-x$, we  get $f(x^+)= f((-x )^+)  + f(x)$. Thus
$
f(x)=f(x^+)-f((-x )^+)
%=-[-f(x^+)+f((-x )^+)]
=-f(-x).
$ Hence, $f$ has no bid-ask spread.
From CPP and NS we then deduce that PCP holds since:

\p\qquad $f((x-k \one_\Omega)^+)+ f(k \one_\Omega)= f(x) -f(-(k \one_\Omega-x)^+)=-f(-x)+f((k \one_\Omega-x)^+).$

$\bullet$ $[(i) \Leftarrow (ii)]$ Immediate. 
\end{proof}

%\begin{center}
\subsection{Proof of Theorem \ref{th:PCP_Choquet}}\label{proof:PCP_Choquet}
%\end{center}

The following lemmata will prove useful in the proof of Theorem \ref{th:PCP_Choquet}.

\begin{lemma}\label{lemma:strong_TI}%%%%%
A pricing rule $f:B(\Omega,\A)\rightarrow\R$ satisfies Translation Invariance if and only if 

\p\qquad$f(x+k\one_\Omega)=f(x)+kf( \one_\Omega)\; \text{ for all}\; x\in  B(\Omega,\A)\; \text{and all }\; k\in \R.$
\end{lemma}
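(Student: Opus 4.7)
The plan is to prove the equivalence by handling the two directions separately. The $(\Leftarrow)$ direction is immediate: restricting the identity $f(x+k\one_\Omega)=f(x)+kf(\one_\Omega)$ to $k=t\ge 0$ and then to $x=0$ gives $f(t\one_\Omega)=f(0)+tf(\one_\Omega)$; once one observes $f(0)=0$ (a consequence of TI by $x=t=0$, and forced as well by the strong form in the paper's framework), this matches $f(t\one_\Omega)=tf(\one_\Omega)$ and TI follows.

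For the substantive direction $(\Rightarrow)$, the key idea is to analyse the restriction of $f$ to constant payoffs. Define $g:\R\to\R$ by $g(k):=f(k\one_\Omega)$. Using TI with $x=k_1\one_\Omega$ (for any $k_1\in\R$) and $t=k_2\ge 0$ yields $g(k_1+k_2)=g(k_1)+g(k_2)$, i.e., Cauchy's equation on the additive half-line. Taking $k_1=k_2=0$ gives $g(0)=0$; taking $k_1=-t$, $k_2=t$ with $t\ge 0$ gives $g(-t)=-g(t)$, so $g$ is odd on $\R$. Because $f$ is monotone in the setting of Theorem~\ref{th:PCP_Choquet} (in whose proof this lemma is invoked), $g$ is monotone on $\R$; a monotone additive function on $\R$ is linear, so $g(k)=kg(1)=kf(\one_\Omega)$ for every $k\in\R$.

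It remains to extend from constants to arbitrary claims. For $k\ge 0$, TI yields $f(x+k\one_\Omega)=f(x)+f(k\one_\Omega)=f(x)+kf(\one_\Omega)$ directly. For $k<0$, set $y:=x+k\one_\Omega$ and $t:=-k>0$, and apply TI to the pair $(y,t)$: $f(x)=f(y+t\one_\Omega)=f(y)+f(t\one_\Omega)=f(x+k\one_\Omega)+(-k)f(\one_\Omega)$, which rearranges to $f(x+k\one_\Omega)=f(x)+kf(\one_\Omega)$, as required.

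The main obstacle is the passage from additivity of $g$ on $\R_+$ to the linear formula $g(k)=kf(\one_\Omega)$ on all of $\R$: Cauchy's equation is well known to admit pathological, non-linear solutions in the absence of any regularity, so this step relies crucially on the monotonicity of $f$ (available from the hypotheses of Theorem~\ref{th:PCP_Choquet}) to rule them out. Once that hurdle is cleared, the extension to negative $k$ and to general $x$ is a routine inversion of TI.
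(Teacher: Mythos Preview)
Your proof is correct and follows essentially the same route as the paper's. Both arguments (i) extend TI from $k\ge 0$ to $k<0$ by the substitution $y=x+k\one_\Omega$, $t=-k$, and (ii) obtain $f(k\one_\Omega)=kf(\one_\Omega)$ from Cauchy additivity of $g(k):=f(k\one_\Omega)$ together with monotonicity; the paper just writes the latter out explicitly (integers $\to$ rationals $\to$ reals by sandwiching), whereas you invoke the standard fact directly. You are also right to flag that monotonicity is not part of the lemma's hypotheses but is borrowed from the ambient setting of Theorem~\ref{th:PCP_Choquet}---the paper uses it in the same way without comment.
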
 %%%%%
\begin{proof} %%%%%
$\bullet$ [$\Rightarrow$] We first prove that, $\text{ for all}\; x\in  B(\Omega,\A)\; \text{and all }\; k\in \R , f(x+k\one_\Omega)=f(x)+f( k\one_\Omega)$. Indeed, for $k\in \R_+$ this follows from the translation invariance of $f$. Moreover,  by Translation Invariance of $f$, we have $f(0)=0$   (taking $x=0$ and $k=0$) and  
$0=f(0)= f(-k\one_\Omega+k\one_\Omega)=f(-k\one_\Omega)+f(k\one_\Omega)$ (taking $x=-k\one_\Omega$). Thus 
$f(-k\one_\Omega)=-f(k\one_\Omega)$ for all  $  k\in \R_+ $. Finally,  $f(x)=f(x-k\one_\Omega+k\one_\Omega)=f(x-k\one_\Omega)+f(k\one_\Omega)$, hence:

\p\qquad $f(x-k\one_\Omega)=f(x)-f(k\one_\Omega) =f(x)+ f(-k\one_\Omega)$ for all $ k\in \R_+$.
\p
%Let $k\geq 0$.  From TI, one has $0=f(0\one_\Omega)=f(-k\one_\Omega+k\one_\Omega)=f(-k\one_\Omega)+f(k\one_\Omega)$ and hence $f(-k\one_\Omega)=-f(k\one_\Omega)$. Note now that by TI $f(x)=f(x-k\one_\Omega+k\one_\Omega)=f(x-k\one_\Omega)+f(k\one_\Omega)$ and therefore 
%$f(x-k\one_\Omega)=f(x)-f(k\one_\Omega)=f(x)+f(-k\one_\Omega)$.
%

 We complete the proof by showing that $f( k\one_\Omega)=kf( \one_\Omega)$ for all $k\in \R$. 
First, let $n\in \N$ and $t\in\R$, then TI implies $f(nt\one_\Omega)=f((n-1)t\one_\Omega+t\one_\Omega)=f((n-1)t\one_\Omega)+f(t\one_\Omega)$ and by induction $f(nt\one_\Omega)=nf(t\one_\Omega)$. Consequently, from the first part of the proof, $f(-n\one_\Omega)=-nf(\one_\Omega)$ for all $n\in \N$. Second, let $q=\frac{n}{m}\in \Q$ (with   $n\in \Z,m\in \N\setminus\{0\}$). Then by what we just proved $nf( \one_\Omega)=f(n\one_\Omega)=f(mq\one_\Omega)=mf( q\one_\Omega).$ Thus
$f(q\one_\Omega)=\frac{n}{m}f(\one_\Omega)=qf(\one_\Omega)$.
%$f(q\one_\Omega)=f(n\frac{1}{m}\one_\Omega)=nf(\frac{1}{m}\one_\Omega)$. Note also that $f(\one_\Omega)=f(\frac{m}{m}\one_\Omega)=mf(\frac{1}{m}\one_\Omega)$ and therefore $f(\frac{1}{m}\one_\Omega)=\frac{1}{m}f(\one_\Omega)$ and we can conclude $f(q\one_\Omega)=\frac{n}{m}f(\one_\Omega)=qf(\one_\Omega)$. 
Finally, let $k\in\R$, then  there are two   sequences  $(q^1_n)_n\subseteq \Q $, $(q^2_n)_n\subseteq \Q $ such that $q^1_n\uparrow k$ and $q^2_n\downarrow k$. By the first part of the proof  and by   monotonicity of $f$, for all $n$, $q^1_nf(\one_\Omega)=f(q^1_n\one_\Omega)\leq f(k\one_\Omega) \leq f(q^2_n\one_\Omega)=q^2_nf(\one_\Omega)$. Letting $n\rightarrow \infty$ shows that $f(k\one_\Omega)=kf(\one_\Omega)$.  
%Finally, note  that if $t>0$ one has $0=f(0)=f(-t\one_\Omega+t\one_\Omega)=f(-t\one_\Omega)+tf(\one_\Omega)$. Hence $f(-t\one_\Omega)=-tf(\one_\Omega)$ which concludes the proof.

\p\noindent
$\bullet$ [$\Leftarrow$] The proof is immediate noticing that $f( k\one_\Omega)=kf( \one_\Omega)$ for all $k\in \R$.
\end{proof} %%%%%%%%

\begin{lemma}\label{continuous} %%%%%%
Let  $f:B(\Omega,\A)\rightarrow\R$ be monotone and translation invariant, let $k:=f(\one_\Omega)$, then
$$|f(x)-f(y)|\le k\|x-y\|_\infty   \text{ for all}\; x, y \in  B(\Omega,\A).$$
\end{lemma}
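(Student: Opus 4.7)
The plan is to reduce everything to a sandwich inequality provided by monotonicity, then convert it to a two-sided bound via the strong form of translation invariance supplied by Lemma \ref{lemma:strong_TI}.

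First I would check that $k:=f(\one_\Omega)\ge 0$. Translation invariance applied at $x=0,t=0$ gives $f(0)=2f(0)$, hence $f(0)=0$; then monotonicity applied to $\one_\Omega\ge 0$ yields $k=f(\one_\Omega)\ge f(0)=0$. Next, by Lemma \ref{lemma:strong_TI}, translation invariance is equivalent to $f(x+t\one_\Omega)=f(x)+tk$ for \emph{every} $t\in\R$ (not just $t\ge 0$), and this is the identity I will use in the estimate below.

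The core step is the following. Fix $x,y\in B(\Omega,\A)$ and set $\delta:=\|x-y\|_\infty$. By definition of the sup-norm, $|x(\omega)-y(\omega)|\le \delta$ for all $\omega\in\Omega$, which rearranges to the pointwise sandwich
\[
y-\delta\one_\Omega \;\le\; x\;\le\; y+\delta\one_\Omega.
\]
Applying monotonicity of $f$ and then the strong translation invariance gives
\[
f(y)-\delta k \;=\; f(y-\delta\one_\Omega)\;\le\; f(x)\;\le\; f(y+\delta\one_\Omega)\;=\;f(y)+\delta k,
\]
so that $|f(x)-f(y)|\le \delta k=k\|x-y\|_\infty$, as claimed.

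There is no real obstacle: the only subtlety is that one needs translation invariance in the \emph{signed} form $f(y-\delta\one_\Omega)=f(y)-\delta k$, which is exactly what Lemma \ref{lemma:strong_TI} delivers; the bare definition of TI (requiring $t\in\R_+$) would only give the upper bound. Everything else is a direct application of monotonicity.
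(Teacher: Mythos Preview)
Your proof is correct and follows essentially the same route as the paper: sandwich $x$ between $y\pm\|x-y\|_\infty\one_\Omega$, apply monotonicity, and use Lemma~\ref{lemma:strong_TI} to peel off the constant. The only cosmetic difference is that the paper uses just the upper bound $x\le y+\|x-y\|_\infty\one_\Omega$ and then swaps the roles of $x$ and $y$, so it never needs the signed form of translation invariance; your two-sided sandwich works equally well but does invoke the $t<0$ case of Lemma~\ref{lemma:strong_TI}.
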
 %%%%%
\begin{proof} %%%%%
For all $x, y \in  B(\Omega,\A)$ one has $  x   \le y+ \|x-y\|_\infty  \one_\Omega$. Since $f$ is monotone and translation invariant, we deduce that 
$ f( x)   \le f(y+ \|x-y\|_\infty  \one_\Omega)=f(y)+ k\|x-y\|_\infty .$
Exchanging the role of $x$ and $y$, we get $ f( y)   \le  f(x)+ k \|y-x\|_\infty  .$ Thus, $|f(x)-f(y)|\le k\|x-y\|_\infty $. 
\end{proof} %%%%%%%%

\begin{lemma}\label{lemma:PCP-TI_and_CM}
A pricing rule $f:B(\Omega,\A)\rightarrow\R$ satisfies PCP if and only if it satisfies Translation Invariance and the following Buy $\&$ Sell Additivity Property:   
\begin{equation}\label{eq:Ch-Dec}
f(x)=f(x\wedge 0) +f(x\vee 0)=f(x^+)+f(-(-x)^+) \;\; \text{ for all } x\in B(\Omega,\A) .
\end{equation}
\end{lemma}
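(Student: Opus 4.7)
The plan is to prove both implications directly from the algebraic identities $(y-k\one_\Omega)^+ = y\vee k\one_\Omega - k\one_\Omega$ and $-(k\one_\Omega - y)^+ = y\wedge k\one_\Omega - k\one_\Omega$ collected in \eqref{eq:+_vee_w}, together with the two trivial observations $x^+ = x\vee 0$ and $-(-x)^+ = x\wedge 0$. No deep machinery is needed; the only subtlety is choosing the right substitutions so that one can reduce PCP to TI plus Buy \& Sell Additivity and vice versa. Throughout I implicitly use that $f(0)=0$, which I will establish first.

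\textbf{Direction $(\Rightarrow)$.} First apply PCP with $x=0$ and $k=0$: since $(0)^+=0$, one obtains $f(0) = f(0)+f(0)+f(0)$, hence $f(0)=0$. Next, applying PCP to an arbitrary $x\in B(\Omega,\A)$ with strike $k=0$ gives
$$
f(x) \;=\; f(x^+) + f(-(-x)^+) + f(0) \;=\; f(x\vee 0) + f(x\wedge 0),
$$
which is exactly the Buy \& Sell Additivity \eqref{eq:Ch-Dec}. Finally, for TI, fix $x\in B(\Omega,\A)$ and $k\in\R_+$ and apply PCP to the payoff $y:=x+k\one_\Omega$ with strike $k$; then $(y-k\one_\Omega)^+ = x^+$ and $(k\one_\Omega - y)^+ = (-x)^+$, so
$$
f(x+k\one_\Omega) \;=\; f(x^+) + f(-(-x)^+) + f(k\one_\Omega) \;=\; f(x) + f(k\one_\Omega),
$$
where the second equality uses the Buy \& Sell Additivity proved just above. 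This yields TI.

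\textbf{Direction $(\Leftarrow)$.} Conversely, fix $x\in B(\Omega,\A)$ and $k\in\R_+$. Applying the Buy \& Sell Additivity \eqref{eq:Ch-Dec} to the payoff $y := x-k\one_\Omega$ produces
$$
f(x-k\one_\Omega) \;=\; f\bigl((x-k\one_\Omega)^+\bigr) + f\bigl(-\bigl(-(x-k\one_\Omega)\bigr)^+\bigr) \;=\; f\bigl((x-k\one_\Omega)^+\bigr) + f\bigl(-(k\one_\Omega-x)^+\bigr).
$$
On the other hand, applying TI to $x-k\one_\Omega$ with translation $k\ge 0$ gives $f(x) = f(x-k\one_\Omega) + f(k\one_\Omega)$. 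Combining the two displays yields PCP, which concludes the proof.

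\textbf{Main obstacle.} There is no genuine obstacle: the only point one has to be careful about is that TI, as stated, is only available for non-negative translations, so in the $(\Leftarrow)$ direction one must apply TI to $x-k\one_\Omega$ (with the \emph{non-negative} translation $+k\one_\Omega$) rather than trying to subtract $k\one_\Omega$ from $x$ directly, which would require the extended form of Lemma~\ref{lemma:strong_TI}.
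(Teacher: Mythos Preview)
Your proof is correct and follows essentially the same route as the paper's: deduce $f(0)=0$ and Buy~\&~Sell Additivity from PCP at $k=0$, then obtain TI by applying PCP to $x+k\one_\Omega$ with strike $k$; conversely, apply Buy~\&~Sell Additivity to $x-k\one_\Omega$ and combine with TI. Your careful remark that one can apply TI to $x-k\one_\Omega$ with the \emph{non-negative} translation $+k\one_\Omega$ is a small improvement over the paper, which at that step invokes the extended translation invariance of Lemma~\ref{lemma:strong_TI}.
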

\begin{proof}
$\bullet$ [$\Rightarrow$] First, we have $f(0)=0$ by  PCP, taking   $x=0$ and $k=0$. 
%Indeed, take $x=0$ and $k=0$ and note that PCP implies $f(0)=f(0)+f(0)+f(0)$.  Therefore $f(0)=0$. 

Second, (\ref{eq:Ch-Dec}) follows from PCP,   taking $k=0$, since $f(0)=0$.
 
 We now prove that $f$ is translation invariant. Let $x\in B(\Omega,\A)$ and let  $k\geq 0$. Then   one has: 
\begin{align*}
f(x )& =f(x^+)+f(-(-x)^+) &\text{by (\ref{eq:Ch-Dec})}\\
	&=f((x+ k\one_\Omega-k\one_\Omega)^+)+ f(-(k\one_\Omega-x-k\one_\Omega)^+) \\
	&= f(x+ k\one_\Omega) -f(k\one_\Omega)& \text{by PCP}.
\end{align*}
Consequently, $f$ is translation invariant.

\p\noindent $\bullet$
 [$\Leftarrow$] Fix  $x\in B(\Omega,\A)$ and $k\geq 0$. Then PCP holds since we have:
  \begin{align*}
 f((x-k\one_\Omega)^+)+f(-(k\one_\Omega-x)^+)& = f(x-k\one_\Omega)&\text{ from (\ref{eq:Ch-Dec})}\\
  	&= f(x)-f(k\one_\Omega)  &\text{ from TI and Lemma \ref{lemma:strong_TI}.}
 \end{align*}
 \end{proof}

\begin{lemma}\label{lemma:PCP-ComoAdd}
A monotone pricing rule $f:B(\Omega,\A)\rightarrow\R$ satisfies PCP if and only if it is Comonotonic Additive.
\end{lemma}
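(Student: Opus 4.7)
For the easy direction ($\Leftarrow$), assume $f$ is comonotonic additive. Write $x = (x-k\one_\Omega)^+ + (-(k\one_\Omega-x)^+) + k\one_\Omega$, as in (\ref{eq:+_vee_w}). The first two summands have disjoint supports (on $\{x>k\}$ the second vanishes, on $\{x<k\}$ the first vanishes), hence they are comonotonic; and $k\one_\Omega$, being constant, is comonotonic with anything. Two successive applications of comonotonic additivity then yield PCP. No monotonicity is required for this direction.

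For the substantive direction ($\Rightarrow$), suppose $f$ is monotone and satisfies PCP. Lemma \ref{lemma:PCP-TI_and_CM} gives translation invariance and the Buy \& Sell decomposition $f(x) = f(x^+) + f(-x^-)$, and Lemma \ref{continuous} provides Lipschitz continuity of $f$ in the sup norm. A first useful by-product is the discount-certificate-call identity $f(x) = f((x-k\one_\Omega)^+) + f(x\wedge k\one_\Omega)$: from $x\wedge k\one_\Omega = k\one_\Omega - (k\one_\Omega - x)^+$ and TI one gets $f(x\wedge k\one_\Omega) = f(-(k\one_\Omega-x)^+) + f(k\one_\Omega)$, and plugging into PCP gives the identity. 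Applied to $z=(a+b)\one_A$ with threshold $k=b$, this yields the scalar--indicator additivity $f((a+b)\one_A) = f(a\one_A) + f(b\one_A)$ for all $A\in\A$ and $a,b\geq 0$.

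To prove comonotonic additivity, first use TI to reduce to $x,y\geq 0$ (translating by large constants preserves comonotonicity and the question is equivalent). Next, approximate uniformly by simple functions $x_n, y_n\geq 0$ obtained from $x,y$ via a common non-decreasing discretization (floor composition); since composing a comonotonic pair by non-decreasing maps preserves comonotonicity, $x_n$ and $y_n$ are comonotonic, and Lipschitz continuity transfers additivity from the simple case to the general case. For simple comonotonic $x,y\geq 0$, write them in canonical form $x=\sum_{i=1}^m a_i \one_{E_i}$, $y=\sum_{i=1}^m b_i \one_{E_i}$ with $\{E_i\}$ a common partition and both sequences $(a_i),(b_i)$ non-decreasing. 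Induct on $m$: the base case $m=1$ follows from strong TI (Lemma \ref{lemma:strong_TI}). For the inductive step, apply the discount-certificate identity to $x$, $y$, and $x+y$ with thresholds $a_{m-1}$, $b_{m-1}$, and $a_{m-1}+b_{m-1}$; each splits into a ``top'' piece supported on $E_m$ (namely $(a_m-a_{m-1})\one_{E_m}$, $(b_m-b_{m-1})\one_{E_m}$, and their sum) and a ``bottom'' piece whose comonotonic pair has at most $m-1$ distinct levels. The tops match up exactly via the scalar-indicator additivity proved in the previous paragraph, and the bottoms satisfy the inductive hypothesis. The main obstacle is precisely making this inductive peeling coherent across the three functions: one has to verify that the thresholds are compatible (so the three DCP decompositions align on the same set $E_m$) and that the resulting ``bottom'' pieces remain simple comonotonic non-negative pairs with strictly fewer distinct levels so the induction actually descends.
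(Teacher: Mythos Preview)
Your proof is correct and follows essentially the same strategy as the paper's: reduce to non-negative comonotonic pairs via translation invariance (the paper's Step~3), pass to simple functions by Lipschitz continuity (Step~2), and handle simple non-negative comonotonic pairs by a layer-peeling argument driven by the combination of Buy \& Sell additivity and TI (Step~1). The only organizational difference is packaging: you first isolate the identity $f(x)=f((x-k\one_\Omega)^+)+f(x\wedge k\one_\Omega)$ (your ``DCP'') and peel off the \emph{top} level $E_m$ in a single induction on the common partition size, whereas the paper writes $x=\sum_i X_i$ with $X_i=(x_i-x_{i-1})\one_{A_i\cup\cdots\cup A_n}$, peels from the \emph{bottom} by proving $f(\Sigma_{i,n})=f(X_i)+f(\Sigma_{i+1,n})$ (which is precisely your DCP applied to $\Sigma_{i,n}$ at level $x_i-x_{i-1}$), and then assembles $f(x)=\sum f(X_i)$, $f(y)=\sum f(Y_i)$, $f(x+y)=\sum f(X_i+Y_i)$ separately before matching terms. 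Your route is arguably tidier---naming DCP once and reusing it both for the scalar--indicator additivity $f((a+b)\one_A)=f(a\one_A)+f(b\one_A)$ and for the induction step---while the paper's route keeps everything in terms of Buy \& Sell and TI without introducing DCP as an intermediate object; substantively they coincide.
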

\begin{proof}

\noindent{$\bullet$ [$(i)\implies (ii)$]} The following steps prove that $f$ is comonotonic additive.

\begin{step}\label{step:g_finite_como_add}
$f(x+y)=f(x)+f(y)$ for all comonotonic and positive step functions $x,y\in B(\Omega,\A)$.
\end{step}
\begin{proof}
Let $x,y\in B(\Omega,\A)$ be comonotonic and positive step functions. By comonotonicity there is a partition $A_1,\dots,A_n$ of $\Omega$  such that:
\begin{align*}
x &=x_1 \one_{A_1}+\dots+x_n \one_{A_n}, \text{ with } 0\leq x_1\leq \dots\leq x_n, \\
y &=y_1 \one_{A_1}+\dots+y_n \one_{A_n}, \text{ with } 0\leq  y_1\leq \dots\leq y_n. 
\end{align*}
Equivalently, one can write
\begin{align*}
x &= \sum_{i=1}^nX_i, \text{ where }  X_i=(x_i-x_{i-1})\one_{A_i\cup \dots \cup A_n},\ i=1,\dots,n\ \text{ and }\ x_0=0, \\
y &= \sum_{i=1}^nY_i, \text{ where }  Y_i=(y_i-y_{i-1})\one_{A_i\cup \dots \cup A_n},\ i=1,\dots,n\ \text{ and }\ y_0=0. 
\end{align*}

\p\noindent
$\star$ We first show that $f(x)=\sum_{i=1}^n f(X_i)$, $f(y)=\sum_{i=1}^nf(Y_i),\,\, \text{ and }\,\,  f(x+y)=\sum_{i=1}^nf(X_i+Y_i)
$. 

It is enough to prove that for $i=1,\dots,n-1:$
$$f(X_i+X_{i+1}+\dots+X_n)=f(X_i)+f(X_{i+1}+\dots+X_n).$$
For $i$ fixed $1\leq i \leq n,$ set   $\Sigma_{i,n}:=X_{i}+\dots+X_n$. Then we have:
$$\Sigma_{i+1,n}=(x_{i+1}-x_i)\one_{A_{i+1}}+(x_{i+2}-x_i)\one_{A_{i+2}}+\dots+(x_n-x_i)\one_{A_n},$$
$$X_i=\Sigma_{i,n}\wedge (x_i-x_{i-1})\one_{\Omega}=[\Sigma_{i,n}-(x_i-x_{i-1})\one_{\Omega}]\wedge 0+ (x_i-x_{i-1})\one_{\Omega}\ \text{ and }$$
$$\Sigma_{i+1,n}=\Sigma_{i,n}\vee (x_i-x_{i-1})\one_{\Omega}-(x_i-x_{i-1})\one_{\Omega}=[\Sigma_{i,n}- (x_i-x_{i-1})\one_{\Omega}]\vee0.$$
Let us show that $f(\Sigma_{i,n})=f(X_i+\Sigma_{i+1,n})=f(X_i)+f(\Sigma_{i+1,n})$. 
%Using Lemma \ref{lemma:strong_TI}  we have 
%\begin{align*}
%f(\Sigma_{i+1,n})&=f(\Sigma_{i,n}\vee (x_i-x_{i-1})\one_{\Omega}-(x_i-x_{i-1})\one_{\Omega})\\
%&=f(\Sigma_{i,n}\vee (x_i-x_{i-1})\one_{\Omega})+f(-(x_i-x_{i-1})\one_{\Omega}).
%\end{align*}
%Consequently, u
Using Buy $\&$ Sell Additivity (\ref{eq:Ch-Dec}) of $f$ and Lemma \ref{lemma:strong_TI} we get:
\begin{align*}
f(X_i)+f(\Sigma_{i+1,n})&=f([\Sigma_{i,n}-(x_i-x_{i-1})\one_{\Omega}]\wedge 0+ (x_i-x_{i-1})\one_{\Omega})
+f([\Sigma_{i,n}- (x_i-x_{i-1})\one_{\Omega}]\vee0 )\\
 &=f(\Sigma_{i,n}-(x_i-x_{i-1})\one_{\Omega})+f((x_i-x_{i-1})\one_{\Omega})\\
 &=f(\Sigma_{i,n})=f(X_i+\Sigma_{i+1,n}).
  \end{align*}

 By induction it is easy to see that
$
f(x)=\sum_{i=1}^nf(X_i).$ Similarly, we prove that $f(y)=\sum_{i=1}^nf(Y_i),\,\, \text{ and }\,\,  f(x+y)=\sum_{i=1}^nf(X_i+Y_i).
$\qed

\p\noindent
$\star$ To conclude the proof we   show that $f(X_i+Y_i)=f(X_i)+f(Y_i)$ for all $i$. 

Note that $X_i=a\one_A$, $Y_i=b\one_A$ with  $A:= A_i\cup \dots \cup A_n$, $a:=x_i-x_{i-1}\ge 0, b:=y_i-y_{i-1}\ge 0$. Thus we only need to prove that $f(a\one_A+b\one_A)=f(a\one_A)+f(b\one_A).$ Indeed, if $x:= (a+b)\one_A-a\one_\Omega=-a\one_{A^c}+b\one_A$ we have $x\vee 0= b\one_A$, $x\wedge 0= -a\one_{A^c}=a\one_A-a\one_\Omega$. Thus, using Translation Invariance and Buy $\&$ Sell Additivity (\ref{eq:Ch-Dec}) of $f$ we obtain:
\begin{align*}
 f(a\one_A+b\one_A)+f(-a\one_\Omega)&=f(x)=f(x\vee0)+f(x\wedge 0)\\
&=f(b\one_A)+f(a\one_A-a\one_\Omega)=f(b\one_A)+f(a\one_A)+f(-a\one_\Omega). 
\end{align*}
Thus, $f(X_i+Y_i)=f(a\one_A+b\one_A)=f(a\one_A)+f(b\one_A)=f(X_i)+f(Y_i).$
\end{proof}

\begin{step}\label{step:g_pcom}
For all $x,y\in B(\Omega,\A)$ positive and comonotonic, $f(x+y)=f(x)+f(y)$.
\end{step}
\begin{proof}
Consider $x,y\in B(\Omega,\A)$ such that $x\geq0$ and $y\geq0$.  Define for all $n\in\N$
$$
x_n=\sum_{i=0}^{n2^n-1}\frac{i}{2^n}\one_{\left\{\frac{i}{2^n}<x\leq \frac{i+1}{2^n}\right\}}\,\,\text{ and }\,\,
y_n=\sum_{i=0}^{n2^n-1}\frac{i}{2^n}\one_{\left\{\frac{i}{2^n}<y\leq \frac{i+1}{2^n}\right\}}.
$$
Since $x$ and $y$ are bounded above, there exists $N \in \N$ such that $n\geq N$ implies
$$
x_n\leq x\leq x_n+\frac{1}{2^n}\one_{\Omega} \,\,\text{ and }\,\, y_n\leq y\leq y_n+\frac{1}{2^n}\one_{\Omega}.
$$
% Hence,
% $$
% x_n+y_n \leq x+y \leq x_n+y_n+\frac{1}{2^{n-1}}\one_{\Omega}
% $$
It is straightforward to check that $x_n$ and $y_n$ are comonotonic since $x$ and $y$ are comonotonic. Therefore Step \ref{step:g_finite_como_add} implies $f(x_n+y_n)=f(x_n)+f(y_n)$. Since $f$ is continuous for the sup norm by Lemma \ref{continuous}, passing to the limit when $n\to \infty$, one gets
$f(x+y)=f(x)+f(y)$
\end{proof}

\begin{step}\label{step:g_com}
For all $x,y\in B(\Omega,\A)$ comonotonic, $f(x+y)=f(x)+f(y)$.
\end{step}
\begin{proof}
Let $x,y\in B(\Omega,\A)$ be comonotonic. We can choose $k\geq0 $ such that $x'=x+k\one_\Omega\geq0$ and $y'=y+k\one_\Omega\geq0$. Since $f$ satisfies TI  %Lemma \ref{lemma:FrBond} 
we have: 
$$f(x'+y')=f(x+y+2k\one_\Omega)=f(x+y)+f(2k\one_\Omega)= f(x+y)+2f( k\one_\Omega).$$
By Step~\ref{step:g_pcom}, noticing that $x'$ and $y'$ are comonotonic    and using again TI, we have: %and Lemma \ref{lemma:FrBond} 
$$f(x'+y')=f(x')+f(y')=f(x)+f(k\one_\Omega)+f(y)+f(k\one_\Omega).$$
Hence $f(x+y)=f(x)+f(y)$.
\end{proof}

\medskip
\noindent{$\bullet$ [$(ii)\implies (i)$]} Since $f$ is comonotonic additive, it satisfies Buy $\&$ Sell Additivity (\ref{eq:Ch-Dec}) and translation invariance, i.e.:
$$f(x)=f(x^+)+f(-(-x)^+) \;\; \text{ for all } x\in B(\Omega,\A),
$$
$$f(x+k\one_\Omega)=f(x)+f(k\one_\Omega)\,\, \text{ for all } x\in  B(\Omega,\A)\,\, \text{ and all }k\in \R.
$$
This follows from the facts that    
$x^+$ and $-(-x)^+$   are comonotonic  and  $x $ and $k\one_\Omega$ are also comonotonic. Consequently $f$ satisfies PCP by Lemma \ref{lemma:PCP-TI_and_CM}.
\end{proof}

\begin{proof}[Proof of Theorem \ref{th:PCP_Choquet}] 
\medskip

From Lemma \ref{lemma:PCP-ComoAdd},  $f$ satisfies PCP if and only if $f$ is comonotonic additive. From Schmeidler \cite{Schmeidler86}, the comonotonic additivity of $f$ is equivalent to the fact that $f$ is a Choquet pricing rule.
\end{proof}

%%%%%%%%%%%%%%%%%%%%%%%%%%%%%%%%%%%%%%%%%%%%%%%%%%%%%%%%%%%%%%%%%%%%%%%
% %Proof of Theorem CPP iff Choquet-Sipos
%%%%%%%%%%%%%%%%%%%%%%%%%%%%%%%%%%%%%%%%%%%%%%%%%%%%%%%%%%%%%%%%%%%%%%%%%%%

%\begin{center}
\subsection{Proofs of Proposition \ref{prop:Sipos-charact}, Theorem \ref{th:CPP_Choquet-Sipos} and Theorem \ref{th:CPP_Choquet-Sipos10}}   \label{proof:CPP_Choquet-Sipos}
%\end{center}

We prove first Theorem \ref{th:CPP_Choquet-Sipos10} and we use it to prove Proposition \ref{prop:Sipos-charact} and Theorem \ref{th:CPP_Choquet-Sipos}.

\begin{proof}[Proof of Theorem \ref{th:CPP_Choquet-Sipos10}] %%%%%
 $\bullet$ [$(i)\implies (iii)$] Note that Choquet pricing rules are Buy $\&$ Sell Additive, i.e.,  for all $x \in B(\Omega,\A)$,  $\int^C x  \, dv=\int^C x^+  \, dv+\int^C -x^-  \, dv$; indeed, from Schmeidler \cite{Schmeidler86}, every   Choquet integral
 is comonotonic additive and, for all $x \in B(\Omega, \A)$,  $x^+$ and $-x^-$ are comonotonic. Then we have 
\begin{align*}
 \int^C x  \, dv  - \int^S x  \, dv&=    \left[\int^C x^+   \, dv + \int^C -x^-  \, dv\right] -\left[\int^C x^+   \, dv-\int^C x^-  \, dv\right]\\   
&=\int^C x^-  \, dv +  \int^C -x^-  \, dv  =f(x^-)+f(-x^-)\ge 0\; \text{ [by }  (i)].\\
 \int^S x  \, dv  + \int^C -x  \, dv&=  -\int^S -x  \, dv  + \int^C -x  \, dv \ge 0 \; \text{ [from above}].
% &=   [ \int^C x^+   \, dv - \int^C  x^-  \, dv] +[\int^C (-x)^+   \, dv+\int^C -(-x)^-  \, dv]\\
% &=   [ \int_\Omega^C x^+   \, dv - \int_\Omega^C  x^-  \, dv] +[\int_\Omega^C  x^-   \, dv+\int_\Omega^C -( x)^+  \, dv]\\   
%&=f(x^+)+f(-x^+)\ge 0\; \text{ [by }  (ii)].
\end{align*}

\p\noindent  $\bullet$ [$(iii)\implies (i)$] Immediate.

%%%%%
\p\noindent$\bullet$ [$(i)\Rightarrow (ii)$] Fix $A\in \A$ and consider $x:= \one_A$. Then  $(ii)$ holds since:
\begin{align*}
 0 \le     f(-x) +f(x) &= f(\one_{A^c} -\one_\Omega) +f(\one_A) \; \text{[by  ($i$)]} \\
&=f(\one_{A^c}) -f(\one_\Omega) +f(\one_A) \; \text{[from  translation invariance]}\\
&=   v( A^c) - v(\Omega)+v(A). 
\end{align*}

%%%%%%
\p\noindent$\bullet$ [$(ii)\Rightarrow (i)$]
%Let  $v^*:\A\to \R$ be the conjugate of the capacity $v$, defined by $v^*(A):=v(\Omega)-v(A^c)= 1-v(A^c)$ for $A\in \A$. 
%From $(iii)$ means that 
%$v\ge v^*$. 
Note that we are done as soon as we prove $(i)$ for $x\geq 0$ since $f$ is a Choquet integral and therefore it satisfies TI.  But, from $(ii)$ we have $ v^*\le v$, and  for  $x\geq 0$ we get from standard properties of the Choquet integral
$$
-f(-x)=-\int^C (-x) \, dv = \int^C  x\, dv^*\le \int^C  x\, dv=f(x).
$$
\end{proof} %%%%%

\begin{proof}[Proof of Proposition \ref{prop:Sipos-charact}]
 \p\noindent $\bullet$ [$(i)\Leftrightarrow (ii)$] Define the   pricing rule $f^*:  B(\Omega,\A)\rightarrow \R$ by $f^*(x):=-f(-x)$. If $f$ is a Choquet pricing rule w.r.t. $v$, then $f^*$ is a Choquet pricing rule w. r. t. $v^*$. 
 Then Assertion $(i)$ holds if and only if  $f(x)+f(-x)\ge 0$ and $f^*(x)+f^*(-x)\ge 0$ for all $x$, hence  if and only if
   $v\ge v^*$ and $v=(v^*)^*\ge v^*$ by  Theorem \ref{th:CPP_Choquet-Sipos10}. That is, $v=v^*$. 

 \p\noindent $\bullet$ [$(i)\Leftrightarrow (iii)$] If $f$ is   a Choquet-\v Sipo\v s pricing rule, then  $-f(-x)=-\int^S -x \, dv=\int^S  x \, dv =f(x)$ for all $x$. Hence,  $(i)$ is satisfied. Conversely, if  $f$ is a Choquet pricing rule  such that, for all $x$,  $f(x)+f(-x)= 0$, then in particular $f(x)+f(-x)\ge 0$. By Theorem \ref{th:CPP_Choquet-Sipos10},    
 $\int^{S} x \, dv\in [-f(-x), f(x)] =\{ f(x)\}$ for all $x$. Thus $f$ is a \v Sipo\v s pricing rule.
\end{proof}

\begin{proof}[Proof of Theorem \ref{th:CPP_Choquet-Sipos}] %%%%
By Proposition \ref{prop:CCP-PCPandNS},  $f$ satisfies   \ref{eq:CPP} if and only if  $f$ satisfies   \ref{eq:PCP} and $f(x)+f(-x)= 0$ for all $x$.  By Theorem \ref{th:PCP_Choquet}, $f$ satisfies   \ref{eq:PCP}, if and only if $f$ is a Choquet pricing rule. By Proposition \ref{prop:Sipos-charact},  $f$ is a Choquet pricing rule and satisfies $f(x)+f(-x)= 0$ for all $x$ if and only if $f$ is   a Choquet-\v Sipo\v s pricing rule. Thus, $(i)$ is equivalent to $(ii)$. 
 \end{proof} %%%%

%%%%%%%%%%%%%%%%%%%%%%%%%%%%%%%%%%%%%%%%%%%%%%%%%%%%%%%%%%%%%%%%%%%%%%%
% %Proof of Theorem DCP iff Sipos
%%%%%%%%%%%%%%%%%%%%%%%%%%%%%%%%%%%%%%%%%%%%%%%%%%%%%%%%%%%%%%%%%%%%%%%%%%%

%\begin{center}
\subsection{Proof of Theorem \ref{th:DCP-Sipos}}\label{proof:DCP-Sipos}
%\end{center}

Let us denote $B_+(\Omega,\A)=\{x\in B(\Omega,\A) \, | \, x\geq 0\}$ and let TI$_+$ (resp. DCP$_+$) denote TI (resp. DCP$_+$) restricted to $B_+(\Omega,\A)$.

\begin{proof}[Proof of Theorem \ref{th:DCP-Sipos}]
 \p\noindent $\bullet$ [$(i)\Rightarrow (ii)$] By taking $k=0$, DCP implies property \ref{eq:Ch-Dec}, i.e., that $f$ is Buy $\&$ Sell Additive.  Moreover DCP implies TI$_+$. For all $x\in B_+(\Omega,\A)$, $k\ge 0$, apply DCP to $x+ k \one_{\Omega}$ and $k$ to get:
$$
f(x+ k \one_{\Omega})=  f((x+ k \one_{\Omega})\vee k \one_\Omega -k \one_\Omega)+f((x+ k \one_{\Omega})\wedge k \one_\Omega)= f(x) +f( k \one_{\Omega}). 
$$
Doing the same proof as in  Step \ref{step:g_finite_como_add}, Step \ref{step:g_pcom} of Lemma \ref{lemma:PCP-ComoAdd}, we can show that $f$ satisfies Comonotonic Additivity on $B_+(\Omega,\A)$. Then by Schmeidler \cite{Schmeidler86}, $f$ is a Choquet pricing rule on $B_+(\Omega,\A)$.  Taking $k=0$, from DCP (first equality), and NS (third equality)   we deduce that for all $x\in B(\Omega,\A)$
$$
f(x)=f(x\wedge 0) + f(x\vee 0)= f(x^+) + f(-x^-)=f(x^+) -f(x^-)
$$
Since  $x^+,x^-\in B_+(\Omega,\A)$, and since  $f$ is a Choquet integral on $B_+(\Omega,\A)$
$$
f(x)=\int^C x^+ dv - \int^C x^- dv=\int^S x^- dv
$$
i.e., $f$ is a \v Sipo\v s pricing rule.

 \p\noindent $\bullet$ [$(ii)\Rightarrow (i)$] By  Theorem $5(ii)$ in \v Sipo\v s  \cite{Sipos}, the \v Sipo\v s integral is monotone when $v$ is a capacity. It is easy to see that a \v Sipo\v s pricing rule satisfies NS.  Only DCP is left to be shown. Using NS, we get for all $x\in B(\Omega,\A)$
$$
f(x)=f(x^+) -f(x^-)=f(x^+) +f(-x^-)= f(x\vee 0)+ f(x\wedge 0),
$$
i.e.,  $f$ is Buy $\&$ Sell Additive.

We show that $f$ satisfies DCP$_+$. Note that  $(x-k \one_\Omega)^+$ and $x\wedge k \one_\Omega$ are comonotonic. Since $f$ is a \v Sipo\v s integral, it is a Choquet integral on $B_+(\Omega,\A)$ and therefore if satisfies comonotonic additivity   on $B_+(\Omega,\A)$. Using the fact that $ x = (x-k \one_\Omega)^++x\wedge k \one_\Omega$, we get  $ f(x) = f((x-k \one_\Omega)^+) +f(x\wedge k \one_\Omega)$. Using \ref{eq:+_vee_w} we obtain 
$$
f(x\vee 0)= f((x\vee 0)\vee k \one_\Omega -k \one_\Omega)+f((x\vee 0)\wedge k \one_\Omega)=f(x\vee   k \one_\Omega -k \one_\Omega)+f((x\wedge k \one_\Omega)\vee 0)
$$
 since $k\ge 0$ implies  $(x\vee 0)\vee k \one_\Omega =x\vee   k \one_\Omega$ and   
  $(x\vee 0)\wedge k \one_\Omega(\omega)=0=(x\wedge k \one_\Omega)\vee 0(\omega)$ if $x(\omega)\leq 0$ and $(x\vee 0)\wedge k \one_\Omega(\omega)=x\wedge k \one_\Omega(\omega)=(x\wedge k \one_\Omega)\vee 0(\omega)$ if $x(\omega)\geq 0$. Also
 $$
 f(x\wedge 0)= f((x\wedge k \one_\Omega)\wedge0).
 $$
Replacing $f(x\vee 0)$ and $f(x\wedge 0)$ in (\ref{eq:Ch-Dec}) and applying (\ref{eq:Ch-Dec}) once again one obtains
$$
f(x)=f(x\vee   k \one_\Omega -k \one_\Omega)+f((x\wedge k \one_\Omega)\vee 0)+f((x\wedge k \one_\Omega)\wedge0)=f(x\vee   k \one_\Omega -k \one_\Omega)+f(x\wedge k \one_\Omega)
$$
i.e., DCP holds.
\end{proof}

%\subsection{Proofs the results in Section \ref{sec:NoA}}

%%%%%%%%%%%%%%%%%%%%%%%%%%%%%%%%%%%%%%%%%%%%%%%%%%%%%%%%%%%%%%%%%%%%%%%
% %Proof of Theorem: f Choquet  NA iff anticore non empty
%%%%%%%%%%%%%%%%%%%%%%%%%%%%%%%%%%%%%%%%%%%%%%%%%%%%%%%%%%%%%%%%%%%%%%%%%%%

\subsection{Proof of Theorem \ref{th:AF_Choquet}}\label{proof:AF_Choquet}

\begin{proof}%[Proof of Theorem \ref{th:AF_Choquet}]
$\bullet$ [Proof of the Choquet Part] $ [\Rightarrow]  $ Assume that $f$ satisfies AF and is a Choquet pricing rule (note that the proof of $\Rightarrow$ works also if $f$ is a \v Sipo\v s pricing and will be needed in the second part). Let $v^*$ be the conjugate of $v$. We   prove that $(i)$ $v^*$ is balanced and $(ii)$ $AC(v)=\core(v^*)$, hence $\core(v^*)\ne \emptyset$  by    Schmeidler \cite{Schmeidler68} that extends  the result by Bondareva \cite{Bondareva1963} and Shapley \cite{Shapley1967} to infinite spaces $\Omega$.\footnote{See also Theorem 5 in Marinacci and  Montrucchio \cite{MarinacciMontrucchio2004}.}

 We first prove that  $v^*$ is balanced. Fix $n\in \N$ and let  $a_1,\dots,a_n\geq0$ and $A_1,\dots,A_n\in \A$ such that  $\sum_{i=1}^n a_i \one_{A_i}-  \one_\Omega  =  0$. Then by AF, $\sum_{i=1}^nf(  a_i\one_{A_i})+f(-  \one_\Omega)\geq 0$. Since $f$ is a Choquet 
 or a \v Sipo\v s
 pricing rule, $f( a_i \one_{A_i})= a_if(\one_{A_i})= a_i v(A_i)$ and $f(- \one_\Omega)=-f( \one_\Omega)=-v(\Omega)$. Therefore, $\sum_{i=1}^n a_i v(A_i)\geq v (\Omega)$ and clearly $\sum_{i=1}^n a_i v^*(A_i)\leq v^*(\Omega)$. Then by Schmeidler \cite{Schmeidler68},  one gets: 
$$
\core(v^*):=\Big\{\mu\in \text{ ba}(\A) : \mu(A)\ge v^*(A) \; \text{ for all } \; A \in\A  \; \text{and } \; \mu(\Omega)= v^*(\Omega) \Big\}\ne \emptyset.
$$
But $v^*$ is a capacity since $v$ is a capacity. Hence all $\mu\in \core(v^*) $ are non-negative. But the set  $ \{\mu \in \text{ ba}(\A):\mu\ge 0, \; \text{and}\;\mu(\Omega)=v(\Omega) \}$ is the set of positive, additive set functions on $\Omega$ such that $\mu(\Omega)=v(\Omega)$. Therefore:
$$
AC(v):= \{\mu:\A \mapsto \mathbb{R}\, :\,\mu \text{ is positive, additive, } \mu\leq v,\text{ and } \mu(\Omega)=v(\Omega)\}=\core(v^*)\ne \emptyset.
$$

\p\noindent $ [\Leftarrow]  $ Assume $AC(v)\ne \emptyset $ and let $\mu \in AC(v)$. Let  $x_1,\dots,x_{n}\in  B(\Omega,\A)$ such that $x:=\sum_{i=1}^nx_i \geq 0$. Then $\int^C  x_i \, dv \geq \int  x_i d\mu $ for all $i$ (since $\mu \le v$ and $\mu(\Omega)  = v(\Omega)$). Since $x\geq 0$, one gets:
$$
\sum_{i=1}^nf(x_i):=\sum_{i=1}^n\int^C  x_i \, dv \geq   \sum_{i=1}^n\int  x_i d\mu  = \int x d\mu\geq 0.
$$
This proves that $f$ is arbitrage free.\qed

\p\noindent
$\bullet$ [Proof of the \v Sipo\v s Part]
If a \v Sipo\v s pricing rule $f$ is linear then it is clearly AF. \\
We now prove the converse implication. Suppose that $f$ is a Sipo\v s pricing rule that satisfies AF. Then,  there exists a    capacity $v$ such that 
$ f(x) =\int^S x\, dv \; \text { for all  } x \in B(\Omega, \A).$ For $f$ to be linear it is sufficient to prove that $v$ is additive since  for $v$ additive one has
$f(x)  =\int^C x^+\, dv \, -\int^C x^-\, dv$ (since f is a \v Sipo\v s  pricing rule) hence 
$f(x) =\int [x^+-x^-]\, dv   = \int x\, dv$ (since  $v$ is additive). We end the proof by showing that $v$ is additive. First note that $ \one_{\Omega}- \one_A- \one_{A^c}=0= \one_A+ \one_{A^c}- \one_{\Omega}$. Since $f$ is AF and $f(-x)=-f(x)$ for all $ x \in B(\Omega, \A)$ (as $f$ is a  \v Sipo\v s  pricing rule), one has:
\begin{align*}
v(\Omega)-v(A)-v(A^c)&= f( \one_{\Omega})+f(- \one_A)+f(- \one_{A^c})\geq 0,\\
v(A)+v(A^c)-v(\Omega) &=f( \one_A)+f( \one_{A^c})+f(- \one_{\Omega})\geq 0.
\end{align*}
 Therefore $v(A)+v(A^c)=v(\Omega)$, that is, $v$ is auto-conjugate.

Finally, from the first part of the theorem,  $AC(v)\neq \emptyset$. Let $\mu \in AC(v)$, then $\mu(A)\leq v(A)$ and $ \mu(A^c)\leq v(A^c)$ for all $A\in \A$. Since  $v$ is auto-conjugate,  neither of the previous inequalities can be strict, otherwise, summing up we would get $v(\Omega)=\mu(\Omega)=\mu(A)+\mu(A^c)< v(A)+v(A^c)=v(\Omega)$, a contradiction. Therefore $\mu(A)= v(A)$  for all $A\in \A$. This proves that $v$ is additive and the Sipo\v s pricing rule $f$ is linear.
\end{proof}

%%%%%%%%%%%%%%%%%%%%%%%%%%%%%%%%%%%%%%%%%%%%%%%%%%%%%%%%
%%% Bibliography
%%%%%%%%%%%%%%%%%%%%%%%%%%%%%%%%%%%%%%%%%%%%%%%%%%%%%%%%%

\normalsize

\end{document}